\DeclareMathOperator{\supp}{supp}
\DeclareMathOperator{\Span}{span}
\DeclareMathOperator{\tr}{tr}
\DeclareMathOperator{\diag}{diag}
\newtheorem{theorem}{Theorem}[section]
\newtheorem{corollary}{Corollary}[section]
\newtheorem{lemma}{Lemma}[section]
\newtheorem{proposition}{Proposition}[section]
\theoremstyle{definition}
\newtheorem{definition}{Definition}[section]
\newtheorem{remark}{Remark}[section]
\newtheorem{example}{Example}[section]
\title{\LARGE \bf
Ergodic Properties of Quantum Markov Semigroups
}
\author{Nicolas Mousset and Nina H. Amini
\thanks{This work was supported by the ANR projects QuanTEdu-France (ANR-22-CMAS-0001), Q-COAST (ANR-19-CE48-0003) and IGNITION (ANR-21- CE47-0015).}
\thanks{N. Mousset is with CNRS, Laboratoire des Signaux et Syst\`{e}mes (L2S), Universit\'{e} Paris-Saclay,
        CentraleSupélec, 91190 Gif-sur-Yvette, France
        {\tt\small nicolas.mousset@centralesupelec.fr}}%
\thanks{N. H. Amini is with CNRS, Laboratoire des Signaux et Syst\`{e}mes (L2S), Universit\'{e} Paris-Saclay,
        CentraleSupélec, 91190 Gif-sur-Yvette, France
        {\tt\small nina.amini@centralesupelec.fr}}%
}
\begin{document}

\maketitle
\thispagestyle{empty}
\pagestyle{empty}

\begin{abstract}

In this paper, we study the ergodic theorem for infinite-dimensional quantum Markov semigroups, originally introduced by Frigerio and Verri in 1982, and its latest version developed by Carbone and Girotti in 2021. We provide a sufficient condition that ensures exponential convergence towards the positive recurrent subspace, a well-known result for irreducible quantum Markov semigroups in finite-dimensional Hilbert spaces. Several illustrative examples are presented to demonstrate the application of the ergodic theorem. Moreover, we show that the positive recurrent subspace plays a crucial role in the study of global asymptotic stability.

\end{abstract}

\section{Introduction}

In Section \ref{Section_Prel_Notions}, we first present the central notions for studying mean-ergodic QMS. In Section \ref{Section_Ergodic_Thm}, we consider the ergodic theorem by Frigerio and Verri \cite{frigerio1982long} for infinite-dimensional Hilbert spaces. We then present its reformulation given in \cite{carbone2021absorption}, where the authors show that a QMS is ergodic when the positive recurrent subspace is absorbing. Assuming the orthogonal of the positive recurrent subspace is finite-dimensional, we prove that the convergence happens exponentially fast (Theorem \ref{Thm_Exp_Conv}). This exponential behavior arises from the semigroup structure.

In Section \ref{Section_Applications}, we present three physical examples and show that the associated QMS is ergodic. In the first example, the two-photon absorption and emission model \cite{fagnola2005two}, we use the fixed-point structure of the semigroup and of its adjoint to demonstrate the ergodicity of the QMS (Theorem \ref{Thm_2_Photon_Abs}). For a generic QMS \cite{carbone2014asymptotic}, we recall that in certain cases, such as when the diagonal of the QMS can be represented by a birth and death process, classical probabilistic arguments can be applied to establish ergodicity. The third example involves Schrödinger cats stabilization via reservoir engineering \cite{azouit2016well, robin2024convergence}, where the authors in \cite{azouit2016well} construct a Lyapunov function. We show that it implies that the associated positive recurrent subspace is absorbing (Proposition \ref{Prop_Q_Harmonic_Oscill}).

In Section \ref{Section_Decomp}, we present the decomposition of the positive recurrent subspace into minimal enclosures, adapted to infinite-dimensional separable Hilbert spaces, building on the framework developed in \cite{baumgartner2012structures, carbone2016irreducible}. Drawing a parallel with the finite-dimensional decompositions proposed in \cite{cirillo2015decompositions}, we show that the positive recurrent subspace remains the minimal \textit{globally asymptotically stable} (\textit{GAS}) subspace even in the infinite-dimensional setting (Theorem \ref{Thm_GAS}). Finally, in Section \ref{Section_Finite_Dim}, we recall that for irreducible QMS on finite-dimensional Hilbert spaces, the rate of convergence can be made explicit \cite{wolf2012quantum}, and we provide a detailed proof of this result.


\section{Preliminary notions} \label{Section_Prel_Notions}

\subsection{Quantum Markov semigroup}

In this paper, we will consider an infinite-dimensional separable Hilbert space $\mathcal{H}$ and we denote by $\mathcal{B(H)}$ the von Neumann algebra of bounded operators acting on $\mathcal{H}$.

Following \cite{fagnola2004quantum}, a quantum Markov semigroup can be defined as follows:
\begin{definition}
    A QMS on $\mathcal{B(H)}$ is a family $\Phi = \left( \Phi_t \right)_{t\ge0}$ of bounded operators on $\mathcal{B(H)}$ with the following properties:
    \begin{enumerate}
        \item $\Phi_0$ is the identity operator on $\mathcal{B(H)}$;
        \item $\Phi_{t+s} = \Phi_t \circ \Phi_s$ for all $t,s \ge 0$;
        \item $\Phi_t$ is completely positive for all $t\ge0$;
        \item $\Phi_t$ is a normal operator for all $t\ge0$, see \cite{fagnola1999quantum} Proposition 1.15 for details;
        \item for all $A \in \mathcal{B(H)}$, the map $t \to \Phi_t (A)$ is weak* continuous;
        \item $\Phi_t \left( \mathds{1} \right) = \mathds{1}$ for all $t\ge0$, where $\mathds{1}$ denotes the identity operator on $\mathcal{B(H)}$. 
    \end{enumerate}
\end{definition}

If the last item in the definition is omitted, $\Phi$ is called a \textit{quantum dynamical semigroup} (\textit{QDS}). In the special case of uniformly continuous QMS, \textit{i.e}, of QMS $\Phi$ that satisfies $\lim_{t \to 0} || \Phi_t - \Phi_0|| = 0$, Lindblad proved the well-known theorem:

\begin{theorem} [Lindblad]
    A bounded operator $\mathcal{L}$ on $\mathcal{B(H)}$ is the infinitesimal generator of a uniformly continuous QMS iff there exists a complex separable Hilbert space $\mathcal{K}$, a bounded operator $L : \mathcal{H} \to \mathcal{H} \otimes \mathcal{K}$ and an operator $G$ in $\mathcal{H}$ such that:
    \begin{equation*}
        \mathcal{L} (A) = L^* \left( A \otimes \mathds{1}_\mathcal{K} \right) L + G^* A + A G
    \end{equation*}
    for all $A \in \mathcal{B(H)}$.
\end{theorem}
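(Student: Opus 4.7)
The plan is to prove the two directions separately. For the sufficient direction, the argument hinges on verifying that $e^{t\mathcal{L}}$ inherits complete positivity from a Dyson--Phillips perturbation expansion, while for the necessary direction, the strategy is to extract a conditional complete positivity property of $\mathcal{L}$ and then reconstruct the Stinespring triple $(\mathcal{K}, L, G)$ by a GNS-type procedure.

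For the sufficient direction ($\Leftarrow$), since the given $\mathcal{L}$ is bounded, it generates the uniformly continuous semigroup $\Phi_t = e^{t\mathcal{L}}$. I would split $\mathcal{L} = \Psi + \delta$ with $\Psi(A) = L^*(A \otimes \mathds{1}_\mathcal{K}) L$ and $\delta(A) = G^* A + A G$. The map $\Psi$ is completely positive by Stinespring's theorem, and $e^{t \delta}(A) = e^{t G^*} A\, e^{t G}$ is completely positive as conjugation by a bounded operator. The Dyson--Phillips series then writes $\Phi_t$ as an absolutely convergent sum of compositions and integrals of these completely positive maps, so $\Phi_t$ is completely positive. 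Unitality $\Phi_t(\mathds{1}) = \mathds{1}$ is ensured at the infinitesimal level by $\mathcal{L}(\mathds{1}) = 0$, which forces $L^*L + G + G^* = 0$; one typically parametrizes $G = -\tfrac{1}{2} L^* L - i H$ with $H$ self-adjoint to make this explicit. Semigroup and weak* continuity properties are automatic from boundedness of $\mathcal{L}$.

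For the necessary direction ($\Rightarrow$), uniform continuity provides a bounded generator $\mathcal{L}$. I would then differentiate the complete positivity inequality $\sum_{j,k} \langle \xi_j, \Phi_t(A_j^* A_k) \xi_k \rangle \ge 0$ at $t=0$ under the constraint $\sum_k A_k \xi_k = 0$, obtaining the \emph{conditional complete positivity} condition: $\sum_{j,k} \langle \xi_j, \mathcal{L}(A_j^* A_k) \xi_k \rangle \ge 0$ whenever $\sum_k A_k \xi_k = 0$. Equivalently, the sesquilinear form
\[
\tau(A \otimes \xi,\, B \otimes \eta) := \langle \xi, \mathcal{L}(A^* B) \eta \rangle - \langle A \xi, \mathcal{L}(B) \eta \rangle - \langle \xi, \mathcal{L}(A^*) B \eta \rangle + \langle A \xi, \mathcal{L}(\mathds{1}) B \eta \rangle
\]
is positive semidefinite on the algebraic tensor product $\mathcal{B(H)} \otimes \mathcal{H}$. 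A KSGNS-type construction (quotienting by the null space and completing) yields a separable auxiliary Hilbert space $\mathcal{K}$ together with a bounded operator $L : \mathcal{H} \to \mathcal{H} \otimes \mathcal{K}$ such that $L^*(A \otimes \mathds{1}_\mathcal{K}) L$ realizes the dissipative part of $\mathcal{L}$. The operator $G$ is then recovered as the bounded operator for which $\mathcal{L}(A) - L^*(A \otimes \mathds{1}_\mathcal{K}) L = G^* A + A G$ holds for all $A$, using the fact that the residual map behaves as a derivation shifted by a bounded multiplier.

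The main obstacle I expect is the construction of $(\mathcal{K}, L)$ in the necessary direction: one must verify positivity of $\tau$ (this is where conditional CP is really used), carefully quotient out the degenerate vectors, establish separability of the resulting completion (which follows from separability of $\mathcal{H}$ together with boundedness of $\mathcal{L}$), and then show that the canonical embedding actually produces a bounded operator $L$ on $\mathcal{H} \otimes \mathcal{K}$ rather than merely a densely defined one. Once $L$ is in hand, the identification of $G$ and the verification of the unitality constraint $\mathcal{L}(\mathds{1}) = 0$ become essentially algebraic.
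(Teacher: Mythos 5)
The paper does not prove this theorem; it is stated as a classical result and attributed to Lindblad, so there is no in-paper argument to compare against. Your outline is the standard Lindblad/Christensen--Evans strategy, and its skeleton is sound: the Dyson--Phillips (or Trotter) argument for the sufficient direction is correct, your sesquilinear form $\tau$ is exactly the dissipation form whose positivity on $\mathcal{B(H)}\otimes\mathcal{H}$ is equivalent to conditional complete positivity (via the standard trick of adjoining the pair $(\mathds{1}, -\sum_k A_k\xi_k)$), and the recovery of $G$ from the residual map rests, as you indicate, on the fact that bounded derivations of $\mathcal{B(H)}$ are inner. One genuine point about the statement itself that you correctly detect but should make explicit: as written, the ``if'' direction is false without the normalization $L^*L + G + G^* = 0$ (equivalently $G = -\tfrac{1}{2}L^*L - iH$ with $H$ self-adjoint), since otherwise $\mathcal{L}(\mathds{1}) \neq 0$ and $e^{t\mathcal{L}}$ is only a quantum dynamical semigroup, not a Markov one; this constraint must be added to the equivalence.

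The step you flag as the ``main obstacle'' is indeed where the real content lies, and your sketch does not close it. The completion of $\left(\mathcal{B(H)}\otimes\mathcal{H}\right)/\ker\tau$ is a priori just some Hilbert space carrying a representation $\pi$ of $\mathcal{B(H)}$; to land in the advertised form $L^*\left(A\otimes\mathds{1}_\mathcal{K}\right)L$ you must show that $\pi$ is \emph{normal}, so that it is unitarily equivalent to $A\mapsto A\otimes\mathds{1}_\mathcal{K}$ on $\mathcal{H}\otimes\mathcal{K}$. Normality of $\pi$ does not follow from boundedness of $\mathcal{L}$ alone (note that $\mathcal{B(H)}\otimes\mathcal{H}$ is not norm-separable, so separability of the completion is not automatic either); it requires the weak$^*$ continuity of $\mathcal{L}$, which one obtains from the normality of each $\Phi_t$ together with uniform continuity. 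Only after that identification does separability of $\mathcal{K}$ follow from separability of $\mathcal{H}$. As an outline of the correct route your proposal is fine, but these two items --- the normalization constraint in the statement and the normality argument underlying the Stinespring-type representation --- are the parts that would have to be supplied to turn it into a proof.
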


Concretely, this means that for all $t\ge0$, we have:
\begin{equation*}
    \Phi_t = e^{t \mathcal{L}} = \sum_{n\ge0} \frac{t^n}{n!} \mathcal{L}^n
\end{equation*}
with the series being uniformly convergent.

However, most of Lindbladians encountered in infinite-dimensional physical models are unbounded, so Lindblad's theorem cannot be used. In this case, Chebotarev developed a theory to solve the equation that characterises a generalised Lindbladian:
\begin{equation} \label{Eq_Generalized_QMS}
    \langle v, \Phi_t (A) u \rangle = \langle v, Au \rangle + \int_0^t \langle v, \mbox{\st{$\mathcal{L}$}} \left( \Phi_s (A) \right) u \rangle ds
\end{equation}
for $A \in \mathcal{B(H)}$ and $u,v \in D(G)$ where $D(G)$ is the domain of $G$ and with
\begin{equation} \label{Eq_Generalized_Lindbladian}
    \langle v, \mbox{\st{$\mathcal{L}$}} (A) u \rangle = \langle v, AGu \rangle + \langle Gv, Au \rangle + \sum_{l = 1}^\infty \langle L_l v, A L_l u \rangle
\end{equation}
assuming that $D(G) \subseteq D(L_l)$ for all $l\ge1$. The following theorem ensures the existence of a \textit{QDS} solution of (\ref{Eq_Generalized_QMS}), which is called the \textit{minimal QDS}:

\begin{theorem} [\cite{fagnola1999quantum}, Theorem 3.22] \label{Thm_Existence_Minimal_QDS}
    Assume that $G$ is the infinitesimal generator of a strongly continuous contraction semigroup and that for every $u \in \mathcal{D(G)}$,
    \begin{equation*}
        \langle u, Gu \rangle + \langle Gu, u \rangle + \sum_{l=1}^\infty \langle L_l u, L_l u \rangle \le 0
    \end{equation*}
    Then there exists a QDS $\Phi_t^{(\mathrm{min})}$ solution of (\ref{Eq_Generalized_QMS}) with the following properties:
    \begin{enumerate}
        \item $\Phi_t^{(\mathrm{min})} \left( \mathds{1} \right) \le \mathds{1}$ for all $t\ge0$;
        \item for every weak* continuous family $\left( \Phi_t \right)_{t\ge0}$ of positive maps on $\mathcal{B(H)}$ solution of (\ref{Eq_Generalized_QMS}) and every positive operator $A \in \mathcal{B(H)}$, we have: $\Phi_t^{(\mathrm{min})} (A) \le \Phi_t (A)$
    \end{enumerate}
    Moreover, if $\Phi_t^{(\mathrm{min})}$ is Markov, meaning identity preserving, then it is the unique solution of (\ref{Eq_Generalized_QMS}).
\end{theorem}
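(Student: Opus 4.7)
The plan is to construct $\Phi^{(\mathrm{min})}$ explicitly as the weak* monotone limit of an iteratively defined sequence of normal completely positive maps, following the standard Dyson--Phillips perturbation approach. First, I would let $P_t = e^{tG}$ denote the strongly continuous contraction semigroup generated by $G$ and introduce the unperturbed family
\begin{equation*}
\mathcal{P}_t^{(0)}(A) := P_t^* A P_t, \qquad A \in \mathcal{B}(\mathcal{H}),
\end{equation*}
which is normal, completely positive, and contracting since $P_t$ is. A preliminary step consists in rewriting (\ref{Eq_Generalized_QMS}) in Volterra (variation-of-constants) form
\begin{equation*}
\langle v, \Phi_t(A) u\rangle = \langle P_t v, A P_t u\rangle + \int_0^t \sum_{l \ge 1} \langle L_l P_{t-s} v, \Phi_s(A) L_l P_{t-s} u\rangle\, ds
\end{equation*}
on $u, v \in D(G)$; this is the form on which the iteration operates.

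Define $\mathcal{P}_t^{(n+1)}$ by substituting $\mathcal{P}_s^{(n)}$ for $\Phi_s$ in the integrand above. The dissipativity assumption on $G$ and $(L_l)_{l\ge1}$ ensures, by induction on $n$, that $\mathcal{P}_t^{(n)}(\mathds{1}) \le \mathds{1}$. More generally, for $A \ge 0$ the sequence $n \mapsto \mathcal{P}_t^{(n)}(A)$ is monotone non-decreasing in the operator order and uniformly bounded by $\|A\| \mathds{1}$, so its weak* (equivalently monotone) limit exists; I would set $\Phi_t^{(\mathrm{min})}(A)$ equal to this limit and extend by linearity.

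The main technical obstacle is passing to the limit in the Volterra equation and recovering (\ref{Eq_Generalized_QMS}), which requires interchanging the weak* limit, the series $\sum_l$, and the integral. This is handled by monotone convergence term-by-term together with Fubini, leveraging positivity throughout; the domain hypothesis $D(G) \subseteq D(L_l)$ ensures that all quantities make sense on a common core. The semigroup property $\Phi_{t+s}^{(\mathrm{min})} = \Phi_t^{(\mathrm{min})} \circ \Phi_s^{(\mathrm{min})}$ follows because both sides, viewed as functions of $t$ with parameter $s$, solve the same Volterra equation with the same initial datum $\Phi_s^{(\mathrm{min})}(A)$; uniqueness at each iteration level together with a Gronwall-type comparison yields the identity.

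Property 1 is immediate from the uniform bound $\mathcal{P}_t^{(n)}(\mathds{1}) \le \mathds{1}$. For minimality (Property 2), given any other weak* continuous positive solution $\Phi_t$ of (\ref{Eq_Generalized_QMS}), positivity of the integrand in the Volterra form gives $\Phi_t(A) \ge \mathcal{P}_t^{(0)}(A)$ for positive $A$, and iterating this estimate yields $\Phi_t(A) \ge \mathcal{P}_t^{(n)}(A)$ for every $n$; passing to the limit gives $\Phi_t \ge \Phi_t^{(\mathrm{min})}$. Finally, in the Markov case, any positive solution $\Phi_t$ satisfies $\mathds{1} \ge \Phi_t(\mathds{1}) \ge \Phi_t^{(\mathrm{min})}(\mathds{1}) = \mathds{1}$, so $\Psi := \Phi_t - \Phi_t^{(\mathrm{min})}$ is a positive map with $\Psi(\mathds{1}) = 0$; for $A \ge 0$ the bound $0 \le \Psi(A) \le \|A\| \Psi(\mathds{1}) = 0$ forces $\Psi = 0$, so $\Phi_t = \Phi_t^{(\mathrm{min})}$, and uniqueness is established.
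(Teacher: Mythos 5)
The paper does not prove this statement: it is imported verbatim from the cited reference (Fagnola, \emph{Quantum Markov semigroups and quantum flows}, Theorem 3.22), so there is no in-paper proof to compare against. Your sketch reproduces the standard Chebotarev--Fagnola construction of the minimal quantum dynamical semigroup, and in outline it is correct: the Volterra (variation-of-constants) rewriting, the monotone iteration $\mathcal{P}^{(n)}_t$, the uniform bound $\mathcal{P}^{(n)}_t(\mathds{1})\le\mathds{1}$ obtained from the dissipativity inequality via $\frac{d}{ds}\|P_s u\|^2 \le -\sum_l \|L_l P_s u\|^2$, the definition of $\Phi^{(\mathrm{min})}_t$ as the increasing weak* limit, and the minimality argument by induction on the iteration are all exactly the steps of the reference.

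Two points deserve tightening. First, your uniqueness argument in the Markov case opens with $\mathds{1}\ge\Phi_t(\mathds{1})$, which is not automatic for an arbitrary weak*-continuous positive solution of (\ref{Eq_Generalized_QMS}); the reference restricts the solution class to sub-Markov (contractive) families, and you should state that hypothesis explicitly, since the rest of the argument ($\Psi(A)\le\|A\|\Psi(\mathds{1})=0$ using that $\Psi=\Phi_t-\Phi^{(\mathrm{min})}_t$ is positive by minimality) is fine only once $\Phi_t(\mathds{1})\le\mathds{1}$ is available. Second, the semigroup property is the least convincing step as written: a ``Gronwall-type comparison'' is not the mechanism used, because the candidate $\Phi^{(\mathrm{min})}_t\circ\Phi^{(\mathrm{min})}_s$ is not directly a solution of the same Volterra equation in any obvious sense. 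The standard argument is a two-index induction establishing $\mathcal{P}^{(n)}_{t+s}(A)\le\Phi^{(\mathrm{min})}_t\bigl(\Phi^{(\mathrm{min})}_s(A)\bigr)$ and $\mathcal{P}^{(n)}_t\bigl(\mathcal{P}^{(m)}_s(A)\bigr)\le\Phi^{(\mathrm{min})}_{t+s}(A)$ for all $n,m$, and then passing to the supremum on both sides; you should replace the Gronwall appeal by this double comparison. With those two repairs your outline matches the proof in the cited source.
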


Note that the assumption on $G$ can be verified using the Hille-Yosida theorem or the Lumer-Philips theorem, see for example \cite{robin2024convergence}. Moreover, there exist some criteria to assess if the minimal QDS is Markov or not, see for example Theorem 3.40 in \cite{fagnola1999quantum}.

\subsection{Transience and recurrence of QMS}

QMS being the analogue of Markov semigroups in classical probabilities, it is natural to define a decomposition of the Hilbert space into recurrent and transient parts for QMS, as is the case for Markov semigroups. This is very useful to qualitatively analyse the large-time behavior of the related open quantum system. The \textit{positive recurrent subspace} has been identified for a long time (\cite{frigerio1982long}) and is defined as follows:
\begin{equation*}
    \mathcal{R}_+ := \sup \left\{ \supp (\rho), \rho \ \mathrm{ normal \ invariant \ state \ of }\  \Phi \right\}
\end{equation*}
where $\supp$ denotes the support. \textit{Normal states} are defined for example in \cite{fagnola1999quantum}, see Definition 1.13.

Then, in \cite{fagnola2003transience}, Fagnola and Rebolledo introduced the notion of \textit{form-potential} of a positive bounded operator $A \in \mathcal{B(H)}$:
\begin{equation*}
    \mathfrak{U}(A)[v] = \int_0^\infty \langle v, \Phi_t (A) v \rangle dt
\end{equation*}
for any $v \in \mathcal{H}$ such that the integral is finite. For every $A$ such that $\mathfrak{U}(A)$ is bounded, it is represented by a self-adjoint operator $\mathcal{U}(A)$, see \cite{fagnola2004quantum}. Then, the \textit{transient subspace} can be defined as follows:
\begin{equation*}
    \mathcal{T} := \sup \left\{ \supp (\mathcal{U} (A)), A \ \mathrm{s. \ t. \ \mathcal{U}(A) \ bounded} \right\}
\end{equation*}

Finally, the remaining subspace is called the \textit{null recurrent subspace}:
\begin{equation*}
    \mathcal{R}_0 := \mathcal{R}_+^\perp \cap \mathcal{T}^\perp
\end{equation*}

And $\mathcal{R} := \mathcal{R}_+ \cup \mathcal{R}_0$ is called the \textit{recurrent subspace}. These subspaces play an important role in the study of the ergodic nature of a QMS.

With this introduction to QMS and its transient and recurrent subspaces, we are ready to present the ergodic theorem.


\section{Ergodic theorem} \label{Section_Ergodic_Thm}

Since a QMS is a semigroup, the question of its ergodic nature refers to the question whether it is mean ergodic or not. For a complete study of mean ergodic semigroups, see \cite{engel2000one}. To answer this question, Frigerio and Verri proved in 1982 the following ergodic theorem for QMS. Let $P_{\mathcal{R}_+}$ be the projection onto $\mathcal{R}_+$, and $\mathcal{F}(\Phi)$ and $\mathcal{F}(\Phi^*)$ the fixed points of the QMS $\Phi$ and of its adjoint semigroup $\Phi^*$.
\begin{theorem}[\cite{frigerio1982long}, Theorem 2.1] \label{Thm_Ergodic}
    For a QMS $\Phi$, the following are equivalent:
    \begin{enumerate}
        \item $\displaystyle w^* - \lim_{t\to\infty} \frac{1}{t} \int_0^t \Phi_s \left(P_{\mathcal{R}_+} \right) ds = \mathds{1}$;
        \item There exists a normal $\Phi$-invariant norm one projection $\mathcal{E}$ of $\mathcal{B(H)}$ onto $\mathcal{F}(\Phi)$;
        \item $\displaystyle \mathcal{E}^* (X) :=  w-\lim_{t\to\infty} \frac{1}{t} \int_0^t \Phi^*_s (X) ds$ \ exists for every $X \in L^1 (\mathcal{H})$;
        \item $\mathcal{F}(\Phi^*)$ separates $\mathcal{F}(\Phi)$.
    \end{enumerate}
    If the above conditions are satisfied, then for all $A \in \mathcal{B(H)}$,
    \begin{equation*}
        \mathcal{E}(A) := w^* - \lim_{t\to\infty} \frac{1}{t} \int_0^t \Phi_s (A) ds
    \end{equation*}
\end{theorem}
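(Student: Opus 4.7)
The plan is to prove the four equivalences cyclically, exploiting the duality between $\mathcal{B(H)}$ and $L^1(\mathcal{H})$: the $w^*$-continuous semigroup $\Phi$ is the adjoint of a strongly continuous $C_0$-semigroup $\Phi^*$ on the trace class, $w^*$-limits in $\mathcal{B(H)}$ correspond to weak limits in $L^1(\mathcal{H})$, normal operators on $\mathcal{B(H)}$ are precisely the adjoints of bounded operators on $L^1(\mathcal{H})$, and Cesàro averaging commutes with taking adjoints. I would establish the cycle (2) $\Leftrightarrow$ (3), (2) $\Rightarrow$ (4), (4) $\Rightarrow$ (1), and (1) $\Rightarrow$ (2), with the final formula for $\mathcal{E}(A)$ reading off from the last implication.

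For (2) $\Leftrightarrow$ (3), the mean ergodic theorem for $C_0$-semigroups on Banach spaces (see \cite{engel2000one}) identifies the existence of a weak Cesàro limit $\mathcal{E}^*$ on $L^1(\mathcal{H})$ with the decomposition $L^1(\mathcal{H}) = \mathcal{F}(\Phi^*) \oplus \overline{\mathrm{Ran}(I - \Phi_1^*)}$ together with a bounded projection onto the first factor. Norm-one of $\mathcal{E}^*$ follows from the contractivity $\|\Phi_t^*\| \le 1$, and the adjoint $\mathcal{E} = (\mathcal{E}^*)^*$ is the normal norm-one projection onto $\mathcal{F}(\Phi)$ demanded in (2), by the usual range-kernel duality for adjoints of bounded operators. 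For (2) $\Rightarrow$ (4), given $0 \neq A \in \mathcal{F}(\Phi)$, one picks a normal state $\rho$ with $\tr(\rho A) \neq 0$ (normal states separate $\mathcal{B(H)}$); the functional $\mathcal{E}^*(\rho) \in \mathcal{F}(\Phi^*)$ then satisfies $\tr(\mathcal{E}^*(\rho) A) = \tr(\rho \mathcal{E}(A)) = \tr(\rho A) \neq 0$.

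The substantive step is (4) $\Rightarrow$ (1). First I would show that $\mathcal{F}(\Phi^*)$ is the norm closure of the linear span of normal $\Phi$-invariant states: the Jordan decomposition of a self-adjoint element of $\mathcal{F}(\Phi^*)$ produces positive parts that remain $\Phi^*$-invariant (using positivity and normality of each $\Phi^*_t$) and, after normalisation, are invariant states whose supports lie in $\mathcal{R}_+$ by definition. Consequently $\tr(\rho P_{\mathcal{R}_+}) = 1$ for every such $\rho$. Any $w^*$-cluster point $Q$ of $\frac{1}{t}\int_0^t \Phi_s(P_{\mathcal{R}_+})ds$ in the $w^*$-compact set $\{B : 0 \le B \le \mathds{1}\}$ lies in $\mathcal{F}(\Phi)$ and satisfies $\tr(\rho(\mathds{1} - Q)) = 0$ for every invariant state $\rho$; since $\mathds{1} - Q \in \mathcal{F}(\Phi)$ is then annihilated by a separating family in $\mathcal{F}(\Phi^*)$, (4) forces $\mathds{1} - Q = 0$. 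Uniqueness of the cluster point upgrades this to full $w^*$-convergence. For (1) $\Rightarrow$ (2), I would extend the same compactness argument to arbitrary $A \in \mathcal{B(H)}$: cluster points are fixed by $\Phi$, and (1) controls the mass of the Cesàro average sitting outside $P_{\mathcal{R}_+}$, forcing uniqueness of the cluster point. The map $\mathcal{E}(A) := w^*\text{-}\lim \frac{1}{t}\int_0^t \Phi_s(A)ds$ is then automatically a normal norm-one projection onto $\mathcal{F}(\Phi)$, delivering both (2) and the closing formula.

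The main obstacle is (4) $\Rightarrow$ (1), where one must confirm that $\mathcal{F}(\Phi^*)$ is genuinely spanned by invariant \emph{states}, not merely by arbitrary fixed normal functionals, so that the separation hypothesis can be transferred onto the projection $P_{\mathcal{R}_+}$; this reduction rests on the fact that normality and positivity of $\Phi^*$ make Jordan decomposition compatible with $\Phi^*$-invariance. A secondary subtlety is the uniqueness of cluster points in (1) $\Rightarrow$ (2), which requires combining (1) with the characterisation of $\mathcal{R}_+$ as the supremum of supports of all normal invariant states.
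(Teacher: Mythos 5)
First, note that the paper itself does not prove this theorem: it is imported verbatim from Frigerio--Verri (Theorem 2.1 of \cite{frigerio1982long}), so your argument can only be judged on its own merits. Most of your cycle is sound and matches the standard machinery. The equivalence $(2)\Leftrightarrow(3)$ via pre-duality is fine once you use normality of $\mathcal{E}$ to identify every $w^*$-cluster point of $\frac{1}{t}\int_0^t\Phi_s(A)\,ds$ with $\mathcal{E}(A)$; $(2)\Rightarrow(4)$ is correct; and $(4)\Rightarrow(1)$ is correct and well executed --- the Jordan-decomposition lemma showing that $\mathcal{F}(\Phi^*)$ is the closed span of normal invariant states does hold (equality of norms forces $\Phi_t^*(\omega_\pm)=\omega_\pm$ by uniqueness of the Jordan decomposition), and combined with $w^*$-compactness of the order interval $[0,\mathds{1}]$ and $w^*$-continuity of each $\Phi_u$ it gives $A(\mathcal{R}_+)=\mathds{1}$.

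The genuine gap is $(1)\Rightarrow(2)$, which is in fact the hard implication of the theorem. For general $A\in\mathcal{B(H)}$ the $w^*$-cluster points $Q$ of the Ces\`aro means all lie in $\mathcal{F}(\Phi)$ and all satisfy $\tr(\rho Q)=\tr(\rho A)$ for every normal invariant state $\rho$; but to conclude that the cluster point is \emph{unique} you need precisely that a fixed point annihilated by every normal invariant state vanishes --- i.e. statement $(4)$, which in your cyclic scheme is not available when proving $(1)\Rightarrow(2)$. Saying that $(1)$ ``controls the mass outside $P_{\mathcal{R}_+}$'' does not close this: from $\tr(\rho C)=0$ for all invariant $\rho$ you cannot infer $P_{\mathcal{R}_+}CP_{\mathcal{R}_+}=0$ (which is what would let you dominate $\pm C$ by $\|C\|\,\Phi_t(\mathds{1}-P_{\mathcal{R}_+})$ and invoke $(1)$), because $\mathcal{F}(\Phi)$ is in general only an operator system, not an algebra on which the invariant states are separating. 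This is exactly where Frigerio and Verri do real work (a faithful normal invariant state on the compression to $\mathcal{R}_+$, Takesaki-type existence of a normal conditional expectation, or equivalently relative weak compactness of the Ces\`aro orbits in $L^1(\mathcal{H})$, which does not follow merely from the mass concentrating on the possibly infinite-dimensional $\mathcal{R}_+$). Relatedly, your closing claim that the limit map $\mathcal{E}$ is ``automatically a normal norm-one projection'' is not automatic: a pointwise $w^*$-limit of normal maps need not be normal, and normality of $\mathcal{E}$ is essentially equivalent to item $(3)$, so asserting it at that stage is circular. You need either to supply $(1)\Rightarrow(4)$ directly, or to replace $(1)\Rightarrow(2)$ by a genuine weak-compactness argument in the predual.
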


\begin{remark}
    The last item of the theorem simply means:
    \begin{equation*}
        \forall A,B \in \mathcal{F}(\Phi), \exists X \in \mathcal{F}(\Phi^*), \langle X, A \rangle \ne \langle X, B \rangle
    \end{equation*}
\end{remark}

Recently, in \cite{carbone2021absorption}, Carbone and Girotti proposed a new formulation of the first item, using the notion of \textit{absorption operators} that they developed. It is known that $\mathcal{R}_+$ is an \textit{enclosure}, \textit{i.e}, an invariant subspace, see for instance \cite{umanita2006classification}, which implies that $P_{\mathcal{R}_+}$ is subharmonic for $\Phi$:
\begin{equation*}
    P_{\mathcal{R}_+} \le \Phi_t \left( P_{\mathcal{R}_+} \right) \le \mathds{1} , \ \forall t \ge 0
\end{equation*}
Denote $A \left( \mathcal{R}_+ \right)$ the weak$^*$ limit of $\left( \Phi_t \left( P_{\mathcal{R}_+} \right) \right)_{t\ge0}$. It is the absorption operator associated with $\mathcal{R}_+$. And it is immediate that the first item of the theorem is equivalent to:
\begin{equation} \label{Eq_Equivalence_with_Abs_Operator}
    A \left( \mathcal{R}_+ \right) = \mathds{1}
\end{equation}
In other words, this means that $\mathcal{R}_+$ is \textit{attractive}. In particular, it implies that $\mathcal{R}_0$ is null and that $\mathcal{T}$ can be forgotten in large-time studies. This is formalized in the next proposition.
\begin{proposition} \label{Prop_R_attractive}
    Assume that $A \left( \mathcal{R}_+ \right) = \mathds{1}$. Then for any normal state $\rho$, as $t$ goes to infinity, $\Phi_t^* (\rho)$ will only be supported by $\mathcal{R}_+$.
\end{proposition}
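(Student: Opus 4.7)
The plan is to translate the statement ``$\Phi_t^*(\rho)$ is supported by $\mathcal{R}_+$'' into the quantitative claim that $\mathrm{Tr}\bigl(P_{\mathcal{R}_+^\perp}\,\Phi_t^*(\rho)\bigr)\to 0$ as $t\to\infty$, and then reduce this to the hypothesis $A(\mathcal{R}_+)=\mathds{1}$ by exploiting (i) the Markov property of $\Phi$, and (ii) the duality between $\Phi_t$ (acting on $\mathcal{B}(\mathcal{H})$ in the weak$^*$ topology) and $\Phi_t^*$ (acting on the trace class $L^1(\mathcal{H})$).

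First I would write $P_{\mathcal{R}_+^\perp} = \mathds{1} - P_{\mathcal{R}_+}$ and use $\Phi_t(\mathds{1})=\mathds{1}$ to deduce $\Phi_t(P_{\mathcal{R}_+^\perp}) = \mathds{1} - \Phi_t(P_{\mathcal{R}_+})$. By the reformulation (\ref{Eq_Equivalence_with_Abs_Operator}) of the first item of Theorem \ref{Thm_Ergodic}, the right-hand side converges weak$^*$ to $0$.

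Next I would test against an arbitrary normal state $\rho$, viewed as a trace-class operator. By the very definition of the predual semigroup one has $\mathrm{Tr}\bigl(\Phi_t(A)\rho\bigr) = \mathrm{Tr}\bigl(A\,\Phi_t^*(\rho)\bigr)$ for every $A \in \mathcal{B}(\mathcal{H})$, so applying this to $A = P_{\mathcal{R}_+^\perp}$ and using the weak$^*$ limit obtained in the previous step yields
\begin{equation*}
\mathrm{Tr}\bigl(P_{\mathcal{R}_+^\perp}\,\Phi_t^*(\rho)\bigr) \;=\; \mathrm{Tr}\bigl(\Phi_t(P_{\mathcal{R}_+^\perp})\,\rho\bigr) \;\xrightarrow[t\to\infty]{}\; 0.
\end{equation*}
Since $\Phi_t^*(\rho)$ is positive, this vanishing of its weight outside $\mathcal{R}_+$ is exactly the assertion that, asymptotically, $\Phi_t^*(\rho)$ is supported only on $\mathcal{R}_+$.

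I do not expect any genuine obstacle in this argument; the proof is essentially a one-line duality calculation once the correct reformulation is in place. The only point requiring a little care is semantic: ``supported by $\mathcal{R}_+$'' must be read in the asymptotic, weight-on-the-complement sense above, rather than as a literal inclusion of supports at finite times (which in general fails as long as $\mathcal{R}_+^\perp$ is not itself invariant for $\Phi^*$).
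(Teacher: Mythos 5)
Your proof is correct and follows essentially the same route as the paper's: both test the weak$^*$ convergence $\Phi_t(\mathds{1}-P_{\mathcal{R}_+})\to 0$ against the trace-class state $\rho$ via the duality $\tr(\Phi_t(A)\rho)=\tr(A\,\Phi_t^*(\rho))$ and then use positivity of $\Phi_t^*(\rho)$. The only difference is cosmetic: the paper makes the last step explicit by rewriting the trace as $\tr\bigl((\mathds{1}-P_{\mathcal{R}_+})\Phi_t^*(\rho)(\mathds{1}-P_{\mathcal{R}_+})\bigr)$ and citing Lemma 2.1 of \cite{fagnola2004quantum} to convert the vanishing of this positive corner into the support statement (including the off-diagonal blocks), which your closing remark handles as a matter of interpretation.
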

\begin{proof}
    First, since $\mathds{1} - P_{\mathcal{R}_+}$ is bounded and since $\rho$ and $\Phi_t^* (\rho)$ are trace-class for any $t\ge0$, we have the following:
    \begin{equation*} \begin{split}
        \tr \left( \Phi_t \left( \mathds{1} - P_{\mathcal{R}_+} \right) \rho \right) &= \tr \left( \left( \mathds{1} - P_{\mathcal{R}_+} \right) \Phi_t^* (\rho) \right)\\ &= \tr \left( \Phi_t^* (\rho) \left( \mathds{1} - P_{\mathcal{R}_+} \right) \right)\\  &= \tr \left( \left( \mathds{1} - P_{\mathcal{R}_+} \right) \Phi_t^* (\rho) \left( \mathds{1} - P_{\mathcal{R}_+} \right) \right)
    \end{split} \end{equation*}
    Therefore, by linearity,
    \begin{equation*}
        \lim_{t\to\infty} \tr \left( \left( \mathds{1} - P_{\mathcal{R}_+} \right) \Phi_t^* (\rho) \left( \mathds{1} - P_{\mathcal{R}_+} \right) \right) = 0
    \end{equation*}
    Since $\left( \mathds{1} - P_{\mathcal{R}_+} \right) \Phi_t^* (\rho) \left( \mathds{1} - P_{\mathcal{R}_+} \right)$ is positive, this implies that it goes to zero as t goes to infinity. Lemma 2.1 in \cite{fagnola2004quantum} concludes the proof.
\end{proof}

\vspace{2mm}

Moreover, due to the structure of semigroup, assuming that $\mathcal{R}_+$ is attractive and that $\mathcal{R}_+^\perp$ is finite-dimensional, we can demonstrate that the convergence towards $\mathcal{R}_+$ is exponentially fast:
\begin{theorem} \label{Thm_Exp_Conv}
    Assume that $A \left( \mathcal{R}_+ \right) = \mathds{1}$ and that $\mathcal{R}_+^\perp$ is finite-dimensional. Then, $\left( \Phi_t \left( \mathds{1} - P_{\mathcal{R}_+} \right) \right)_{t\ge0}$ goes exponentially fast to zero. As a result, given a normal state $\rho$, $\left( \Phi_t^* (\rho) \right)_{t\ge0}$ becomes exponentially fast supported by $\mathcal{R}_+$.
\end{theorem}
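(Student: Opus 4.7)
The plan is to notice that the evolution of $\mathds{1} - P_{\mathcal{R}_+}$ is trapped in a finite-dimensional corner of $\mathcal{B}(\mathcal{H})$, use this to upgrade the weak$^*$ convergence supplied by $A(\mathcal{R}_+) = \mathds{1}$ to norm convergence, and then iterate using the semigroup law. Set $Q = \mathds{1} - P_{\mathcal{R}_+}$ and $X_t := \Phi_t(Q)$. Subharmonicity of $P_{\mathcal{R}_+}$ together with $\Phi_t(\mathds{1}) = \mathds{1}$ immediately gives $0 \le X_t \le Q$, and the standard fact that $0 \le A \le P$ for a projection $P$ forces $A = PAP$ then yields $X_t = Q X_t Q$. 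The whole trajectory $\{X_t : t \ge 0\}$ therefore lies in the finite-dimensional subspace $Q\mathcal{B}(\mathcal{H})Q \simeq \mathcal{B}(\mathcal{R}_+^\perp)$.

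On a finite-dimensional vector space, all Hausdorff vector-space topologies coincide, so the weak$^*$ convergence $X_t \to 0$ (a direct reformulation of $A(\mathcal{R}_+) = \mathds{1}$) is in fact convergence in operator norm. Pick $t_0 > 0$ such that $\lambda := \|X_{t_0}\| < 1$; since $X_{t_0}$ is a positive operator supported on $\mathcal{R}_+^\perp$, this translates to the operator inequality $X_{t_0} \le \lambda Q$.

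The single-step contraction now propagates through the semigroup. Applying $\Phi_{t_0}$ to the inductive hypothesis $X_{(n-1)t_0} \le \lambda^{n-1} Q$ and using positivity gives $X_{n t_0} = \Phi_{t_0}(X_{(n-1)t_0}) \le \lambda^{n-1} X_{t_0} \le \lambda^n Q$. For arbitrary $t = n t_0 + r$ with $r \in [0, t_0)$, the identity $X_t = \Phi_r(X_{n t_0}) \le \lambda^n \Phi_r(Q) \le \lambda^n \mathds{1}$ yields $\|X_t\| \le \lambda^{-1} e^{-\alpha t}$ with $\alpha = -\log(\lambda)/t_0 > 0$. The statement about states then follows by trace duality, exactly as in Proposition~\ref{Prop_R_attractive}: the inequality $\tr\bigl((\mathds{1} - P_{\mathcal{R}_+})\, \Phi_t^*(\rho)\bigr) = \tr(X_t\, \rho) \le \|X_t\|\, \tr(\rho)$ shows the same exponential rate controls the support of $\Phi_t^*(\rho)$ outside $\mathcal{R}_+$. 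The only delicate step is the reduction to finite dimension via $X_t = Q X_t Q$; everything else is a two-line induction, and the finite-codimension hypothesis is used exactly there.
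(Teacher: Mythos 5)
Your proposal is correct and follows essentially the same route as the paper's proof: reduce to the finite-dimensional corner $Q\mathcal{B}(\mathcal{H})Q$ via $0 \le \Phi_t(Q) \le Q$, use finite-dimensionality to extract a time $t_0$ with $\Phi_{t_0}(Q) \le \lambda Q$ for some $\lambda < 1$, and iterate through the semigroup law. Your write-up merely makes explicit two points the paper leaves terse (why $0 \le A \le Q$ forces $A = QAQ$, and why weak$^*$ convergence upgrades to a norm contraction in finite dimension), so there is nothing to add.
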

\begin{proof}
    First, by linearity of $\Phi_t$ and because it is Markov, $A \left( \mathcal{R}_+ \right) = \mathds{1}$ is equivalent to
    \begin{equation*}
        \lim_{t\to\infty} \Phi_t \left( \mathds{1} - P_{\mathcal{R}_+} \right) = 0
    \end{equation*}
    Note also that the convergence towards zero is monotone decreasing. Moreover, $\Phi_t \left( \mathds{1} - P_{\mathcal{R}_+} \right)$ is only supported by $\mathcal{R}_+^\perp$, for any $t\ge0$. This is an application of Lemma 2.1 in \cite{fagnola2004quantum} together with the inequality:
    \begin{equation*}
        0 \le P_{\mathcal{R}_+} \Phi_t \left( \mathds{1} - P_{\mathcal{R}_+} \right) P_{\mathcal{R}_+} \le P_{\mathcal{R}_+} \left( \mathds{1} - P_{\mathcal{R}_+} \right) P_{\mathcal{R}_+} = 0
    \end{equation*}
    
    $\mathcal{R}_+^\perp$ being finite-dimensional, this gives the existence of a constant $0 \le \kappa < 1$ and a $t_0 > 0$ such that:
    \begin{equation} \label{Eq_Proof_Exp_Conv}
        0 \le \Phi_{t_0} \left( \mathds{1} - P_{\mathcal{R}_+} \right) \le \kappa \left( \mathds{1} - P_{\mathcal{R}_+} \right)
    \end{equation}

    Now, using the fact that $\Phi$ is a semigroup of positive operators, Equation (\ref{Eq_Proof_Exp_Conv}) can be expressed in the following terms: for any $t\ge0$,
    \begin{equation*}
        \Phi_t \left( \mathds{1} - P_{\mathcal{R}_+} \right) \le \kappa^{\left\lfloor t/t_0 \right\rfloor} \left( \mathds{1} - P_{\mathcal{R}_+} \right)
    \end{equation*}
    thus the exponential convergence.
\end{proof}


\section{Applications of the ergodic theorem} \label{Section_Applications}

The ergodic theorem \ref{Thm_Ergodic} provides some very useful information about the qualitative behavior of the QMS in large time. However, in general it is not possible to explicitly verify the condition $A(\mathcal{R}_+) = \mathds{1}$ in infinite dimension. In the following, we present some physical examples and prove that the associated QMS is ergodic. 

\subsection{Two-photon absorption and emission model}

The two-photon absorption and emission process is a simple model that already shows a rich behavior. Following \cite{fagnola2005two}, we define it as follows. Consider the Hilbert space $l^2 (\mathbb{N})$ with its canonical basis $\left( e_n \right)_{n\ge0}$ and let $a, a^\dag$ and $N$ be the annihilation, creation and number operators. As in Equation (\ref{Eq_Generalized_Lindbladian}), consider the generalised Lindbladian
\begin{equation} \label{Eq_Lindblad_2_Photon}
    \langle v, \mbox{\st{$\mathcal{L}$}} (A) u \rangle = \langle v, AGu \rangle + \langle Gv, Au \rangle + \sum_{l = 1}^2 \langle L_l v, A L_l u \rangle
\end{equation}
for any $u,v \in D(G) = D(N^2)$ where
\begin{equation*}
    G = - \frac{\lambda^2}{2} a^2 a^{\dag 2} - \frac{\mu^2}{2} a^{\dag 2} a^2 - \imath \omega a^{\dag 2} a^2
\end{equation*}
\begin{equation*}
    L_1 = \mu a^2 \ \ \mathrm{and} \ \ L_2 = \lambda a^{\dag 2}
\end{equation*}
with $\lambda \ge 0, \mu > 0$ and $\omega \in \mathbb{R}$. It has been shown in \cite{fagnola2005two} that if $\nu = \lambda /\mu \le 1$, then Theorem \ref{Thm_Existence_Minimal_QDS} is satisfied, so there exists a unique QMS generated by the Lindbladian $\mathcal{L}$ associated to (\ref{Eq_Lindblad_2_Photon}).

\vspace{2mm}

\begin{theorem} \label{Thm_2_Photon_Abs}
    If $\lambda/\mu < 1$, then the positive recurrent subspace of the QMS associated to the two-photon absorption and emission model is attractive.
\end{theorem}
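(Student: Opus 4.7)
My plan is to verify condition (4) of Theorem \ref{Thm_Ergodic}, namely that $\mathcal{F}(\Phi^*)$ separates $\mathcal{F}(\Phi)$, by explicitly determining both fixed-point sets in this example.

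First, I would describe $\mathcal{F}(\Phi)$. Because $L_1 = \mu a^2$ and $L_2 = \lambda a^{\dag 2}$ shift the photon number by $\pm 2$, and $G$ is a polynomial in $a^{\dag 2}a^2$ and $a^2 a^{\dag 2}$, the parity projections
\[
P_e := \sum_{k\ge 0}|e_{2k}\rangle\langle e_{2k}|, \qquad P_o := \mathds{1} - P_e
\]
commute with $L_1, L_2$, their adjoints, and $G$, so $P_e, P_o \in \mathcal{F}(\Phi)$. Conversely, by the standard Fagnola--Rebolledo commutant characterization of QMS fixed points, any $A \in \mathcal{F}(\Phi)$ must commute with these operators; imposing $[A, a^2] = [A, a^{\dag 2}] = 0$ on matrix elements in the number basis forces $A$ to be diagonal with $A_{nn} = A_{n-2,n-2}$, i.e.\ depending only on the parity of $n$, so $\mathcal{F}(\Phi) = \Span\{P_e, P_o\}$.

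Next, I would exhibit two invariant normal states, one per parity sector. Setting $\nu := \lambda/\mu < 1$, define
\[
\rho_\alpha := (1 - \nu^2) \sum_{k\ge 0} \nu^{2k}\, |e_{2k+\alpha}\rangle\langle e_{2k+\alpha}|, \quad \alpha\in\{0,1\},
\]
which are well-defined normal states because $\nu<1$. A direct computation of $\mathcal{L}^*(\rho_\alpha)$ on each diagonal entry $|e_m\rangle\langle e_m|$ reduces to a two-term detailed-balance identity
\[
\mu^2 (m+1)(m+2)\, p_{m+2} = \lambda^2 (m+1)(m+2)\, p_m,
\]
which the geometric law $p_m \propto \nu^m$ satisfies within each parity class, so $\rho_0, \rho_1 \in \mathcal{F}(\Phi^*)$. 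Computing the pairing gives $\tr(\rho_\alpha P_\beta) = \delta_{\alpha\beta}$; the resulting $2\times 2$ matrix is the identity, hence non-degenerate, so $\mathcal{F}(\Phi^*)$ separates $\mathcal{F}(\Phi)$, and Theorem \ref{Thm_Ergodic} yields $A(\mathcal{R}_+) = \mathds{1}$.

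The main obstacle I anticipate is the rigorous commutant characterization of $\mathcal{F}(\Phi)$ in the unbounded-generator setting, which requires care with the domain of $\mathcal{L}$ and typically leans on the existence of a faithful normal invariant state. Here that role is played by $\tfrac12(\rho_0 + \rho_1)$, whose support equals $\mathcal{H}$; this faithful state even gives a shortcut, since it immediately implies $\mathcal{R}_+ = \mathcal{H}$, whence $P_{\mathcal{R}_+} = \mathds{1}$ and $A(\mathcal{R}_+) = \Phi_t(\mathds{1}) = \mathds{1}$ trivially. Nonetheless I would keep the separation argument as the main line, since it transparently exhibits the interplay between the fixed points of $\Phi$ and those of $\Phi^*$, as announced in the paper's introduction.
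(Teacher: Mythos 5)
Your argument is correct for $\lambda>0$, and in that regime it coincides with (indeed, fills in more detail than) the paper's first case: the existence of the faithful normal invariant state $\tfrac12(\rho_0+\rho_1)$ already gives $\mathcal{R}_+=\mathcal{H}$, hence $A(\mathcal{R}_+)=\Phi_t(\mathds{1})=\mathds{1}$, which is exactly the ``shortcut'' you mention and is the route the paper takes; your detailed-balance verification and the commutant computation $\mathcal{F}(\Phi)=\Span\{P_e,P_o\}$ are consistent with that.

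The genuine gap is the case $\lambda=0$. The model only assumes $\lambda\ge0$ and $\mu>0$, so $\lambda=0$ satisfies the hypothesis $\lambda/\mu<1$, and the paper devotes the entire second half of its proof to it. There your construction degenerates: $\nu=0$ gives $\rho_0=|e_0\rangle\langle e_0|$ and $\rho_1=|e_1\rangle\langle e_1|$, so $\tfrac12(\rho_0+\rho_1)$ is supported only on $\Span\{|e_0\rangle,|e_1\rangle\}$ and is not faithful; in fact $\mathcal{R}_+=\Span\{|e_0\rangle,|e_1\rangle\}\ne\mathcal{H}$, so the shortcut is unavailable. Worse, the commutant characterization of $\mathcal{F}(\Phi)$ that you invoke is precisely the statement that requires a faithful normal invariant state, so you may not conclude $\mathcal{F}(\Phi)=\Span\{P_e,P_o\}$ --- and it is false here: the invariant states of $\Phi^*$ carry off-diagonal components $z|e_0\rangle\langle e_1|+\bar z|e_1\rangle\langle e_0|$, and the fixed points of $\Phi$ are parametrized by the whole corner block $A_{0,0},A_{0,1},A_{1,0},A_{1,1}$ via the induction relations $\mathcal{L}(A)=0$. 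Your two diagonal invariant states cannot separate a fixed point with $A_{0,0}=A_{1,1}=0$ but $A_{0,1}\ne0$, so condition (4) of Theorem \ref{Thm_Ergodic} is not established by them alone. To close the gap you must, as the paper does, determine $\mathcal{F}(\Phi)$ from the recursion (or observe that $P_{\mathcal{R}_+}AP_{\mathcal{R}_+}=0$ forces $A=0$) and separate it using the full family of invariant states, including the off-diagonal ones.
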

\begin{proof}
Let us first assume that $\lambda > 0$. Then, by Theorem 6.1 in \cite{fagnola2005two}, the only invariant states of $\Phi^*$ are
\begin{equation*}
    \rho_e = \left( 1 - \nu^2 \right) \sum_{k\ge0} \nu^{2k} | e_{2k} \rangle \langle e_{2k} |
\end{equation*}
\begin{equation*}
    \rho_o = \left( 1 - \nu^2 \right) \sum_{k\ge0} \nu^{2k} | e_{2k+1} \rangle \langle e_{2k+1} |
\end{equation*}
and any convex combination of $\rho_e$ and $\rho_o$. Therefore, there exists at least one faithful invariant state, which implies that $\mathcal{R}_+ = l^2 (\mathbb{N})$. Then, as $P_{\mathcal{R}_+} = \mathds{1}$, we obviously have $A (\mathcal{R}_+) = \mathds{1}$. So, if $\lambda > 0$, the QMS is ergodic.

\vspace{2mm}

Suppose now that $\lambda=0$, \textit{i.e} suppose that there is no emission. Then, by Proposition 3.2 in \cite{fagnola2005two}, all the invariant states have the form 
\begin{equation*}
    \alpha \rho_e + (1-\alpha) \rho_o + z |e_0 \rangle \langle e_1 | + \Bar{z} |e_1 \rangle \langle e_o|
\end{equation*}
with $\alpha \in [0,1]$ and $|z|^2 \le \alpha (1 - \alpha)$. Therefore, in this case, $\mathcal{R}_+ = \Span \left\{ |e_0\rangle, |e_1\rangle \right\}$. To study the ergodic properties of the QMS, we look at the set of fixed points of $\Phi$. $A \in \mathcal{B(H)}$ is in $\mathcal{F}(\Phi)$ if and only if $\mathcal{L} (A) = 0$, \textit{i.e}, if and only if: 
\begin{itemize}
    \item $A_{j,k}=0$ if $j\ge2$ and $k\le1$ or if $j\le1$ and $k\ge2$; 
    \item for any $j,k \ge 2$,
\end{itemize}
\begin{equation*} \begin{split}
    ( \imath \omega (j(j-1) - k(k-1)) - \frac{\mu^2}{2} (k(k-1) + j(j-1)) ) A_{j,k} \\
    + \ \mu^2 \sqrt{j(j-1)k(k-1)} A_{j-2,k-2} = 0 \phantom{hhhhhh}
\end{split} \end{equation*}
Observe that due to the induction relations, $A$ is totally determined by $A_{0,0}, A_{1,0}, A_{0,1}$ and $A_{1,1}$.
Then, if $\rho$ is an invariant state,
\begin{equation*}
    \langle \rho, A \rangle = \tr (\rho A) = (1-\alpha) A_{0,0} + z A_{1,0} + \Bar{z} A_{0,1} + \alpha A_{1,1}
\end{equation*}
Thus, if $A$ is non-zero, there always exists an invariant state $\rho$ such that $\langle \rho, A \rangle \ne 0$. As $A$ can be viewed as the difference of two fixed points of $\Phi$, this implies that the set $\mathcal{F}(\Phi^*)$ separates $\mathcal{F}(\Phi)$. The QMS is also ergodic if $\lambda=0$.
\end{proof}

\begin{remark}
    Note that the fact that $\mathcal{F}(\Phi^*)$ separates $\mathcal{F}(\Phi)$ can be shown even faster. In fact, since $P_{\mathcal{R}_+} = | e_0 \rangle \langle e_0 | + | e_1 \rangle \langle e_1 |$, it is obvious from the induction relations that $P_{\mathcal{R}_+} A P_{\mathcal{R}_+} = 0$ implies $A = 0$. Therefore, $\mathcal{F}(\Phi^*)$ separates $\mathcal{F}(\Phi)$, see the proof of Theorem 2.1 in \cite{frigerio1982long} for a full explanation.
\end{remark}

\subsection{Generic QMS}

In the previous subsection, we used the characterization by the sets of fixed points to verify that the ergodic theorem applies. There is also a probabilistic approach to study the ergodic nature of the QMS. The idea is to find an abelian subalgebra of $\mathcal{B(H)}$ that is left invariant by the semigroup. This way, the restriction of the semigroup to this subalgebra determines a classical Markov semigroup and the classical theory can thus be applied. For example, we refer to the following proposition.
\begin{proposition} [\cite{fagnola2003transience}, Proposition 7]
    If the restriction of the semigroup to this subalgebra is transient, meaning that $\mathcal{T} = \mathcal{H}$, then so is $\Phi$.
\end{proposition}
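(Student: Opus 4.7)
The plan is to use the form-potential characterization of $\mathcal{T}$ and transport classical transience witnesses from $\mathcal{A}$ into $\mathcal{B(H)}$. Since $\mathcal{A}$ is an abelian von Neumann subalgebra of $\mathcal{B(H)}$ that is $\Phi$-invariant, it identifies spectrally with an essentially bounded function algebra on a classical measurable state space $E$. In the atomic case that covers the standard examples, $\mathcal{A}$ admits a common eigenbasis $\{e_x\}_{x \in E}$ of $\mathcal{H}$; an element $A = \sum_x a(x) |e_x\rangle\langle e_x| \in \mathcal{A}$ corresponds to $a \in \ell^\infty(E)$, and $T_t := \Phi_t|_{\mathcal{A}}$ acts as a classical Markov semigroup $(T_t a)(y) = \sum_x p_t(y,x) a(x)$.

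First, I would translate the classical transience hypothesis into quantum witnesses. For every $x \in E$, transience of the chain $(T_t)$ supplies a positive $a^{(x)} \in \ell^\infty(E)$ with $a^{(x)}(x) > 0$ and bounded Green potential $Ua^{(x)} := \int_0^\infty T_s a^{(x)} ds \in \ell^\infty(E)$; for instance, one may take $a^{(x)}$ to be a suitable indicator whose Green kernel $G(\cdot, x)$ is bounded. Lifting to the diagonal operator $A^{(x)} := \sum_y a^{(x)}(y) |e_y\rangle\langle e_y| \in \mathcal{A}$ and using that $\Phi_s(A^{(x)})$ remains diagonal in $\{e_y\}$ by $\Phi$-invariance of $\mathcal{A}$, I get, for $v = \sum_y v_y e_y$,
\begin{equation*}
    \langle v, \Phi_s(A^{(x)}) v \rangle = \sum_y |v_y|^2 (T_s a^{(x)})(y).
\end{equation*}
Integrating in $s$ and applying Tonelli,
\begin{equation*}
    \mathfrak{U}(A^{(x)})[v] = \sum_y |v_y|^2 (Ua^{(x)})(y) \le \|Ua^{(x)}\|_\infty \|v\|^2,
\end{equation*}
so $\mathfrak{U}(A^{(x)})$ is bounded and represented by $\mathcal{U}(A^{(x)}) = \sum_y (Ua^{(x)})(y)\, |e_y\rangle\langle e_y|$.

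Finally, since $a^{(x)}(x) > 0$ and $T_s a^{(x)}$ is weak* continuous at $s = 0$, one has $(Ua^{(x)})(x) > 0$, hence $e_x \in \supp \mathcal{U}(A^{(x)}) \subseteq \mathcal{T}$. Taking the supremum over $x \in E$ gives $\mathcal{T} \supseteq \overline{\Span}\{e_x : x \in E\} = \mathcal{H}$, so $\mathcal{T} = \mathcal{H}$.

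The main obstacle I anticipate is the non-atomic case: one must then replace $\ell^\infty(E)$ by $L^\infty(E,\mu)$ via a direct-integral decomposition of $\mathcal{H}$, invoke the Markov kernel representation of $(T_t)$, and construct a \emph{countable} family of positive elements with bounded potentials whose supports are jointly dense in $\mathcal{H}$ (to recover $\mathcal{T}$ as a supremum of a set of projections in a weak* sense). The algebraic core of the argument above is unchanged; only the measure-theoretic bookkeeping and the separability argument become heavier.
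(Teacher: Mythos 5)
The paper does not prove this statement---it is quoted verbatim from Fagnola--Rebolledo as a citation---so there is no in-paper proof to compare against. On its own merits, your argument is sound in the atomic case and is exactly the natural potential-theoretic transfer: the identity $\mathfrak{U}(A^{(x)})[v] = \sum_y |v_y|^2 (Ua^{(x)})(y)$ (valid by Tonelli since all terms are nonnegative, and since $\Phi_s$ preserves the diagonal algebra) is the correct bridge between the classical Green potential and the quantum form-potential, and the positivity $(Ua^{(x)})(x)>0$ follows from weak$^*$ continuity at $s=0$ as you say, so $e_x \in \supp\mathcal{U}(A^{(x)}) \subseteq \mathcal{T}$ and the supremum over $x$ gives $\mathcal{T}=\mathcal{H}$. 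Two caveats are worth making explicit. First, the step ``classical transience supplies $a^{(x)}$ with $a^{(x)}(x)>0$ and bounded potential'' depends on which definition of classical transience you adopt: if it is the potential-theoretic one (mirroring the definition of $\mathcal{T}$, which is how Fagnola--Rebolledo set it up), the witnesses exist by hypothesis; if it is the probabilistic one ($G(x,x)<\infty$), you need the bound $G(y,x) = \mathbb{P}_y(\tau_x<\infty)\,G(x,x) \le G(x,x)$, which holds for countable state spaces via the strong Markov property but should be stated rather than left as ``a suitable indicator.'' Second, the non-atomic case is genuinely not covered by what you wrote; you flag this honestly, and for the context in which the paper invokes the proposition (generic QMS with the diagonal subalgebra of $\ell^2(\mathbb{N})$) the atomic case is all that is needed, but a complete proof of the proposition as stated would require the direct-integral bookkeeping you defer.
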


This can be put in parallel with a special case of \textit{open quantum random walks} (\textit{OQRW}) called \textit{minimal OQRW} which can be thought of as diagonal OQRW, see \cite{mousset2025decomposition} for example. In \cite{carbone2014asymptotic}, the authors put this concept into practice for \textit{generic QMS}. A generic QMS is defined as follows. For $A \in D(G) \subseteq \mathcal{B(H)}$, consider the generalised Lindbladian on the Hilbert space $l^2 (\mathbb{N})$:
\begin{equation*}
    \mathcal{L}(A) = AG + G^* A + \sum_{j\ne m} L_{mj}^* A L_{mj}
\end{equation*}
with
\begin{equation*}
    G = - \sum_{m\ge0} \left( - \frac{\gamma_{mm}}{2} + \imath \kappa_m \right) |e_m\rangle \langle e_m |
\end{equation*}
\begin{equation*}
    L_{m,j} = \sqrt{\gamma_{mj}} | e_j \rangle \langle e_m |
\end{equation*}
where $\gamma_{mj} \ge 0$ for $m\ne j$, $\gamma_{mm} = -\sum_{j\ne m} \gamma_{mj} > - \infty$ and $\kappa_m \in \mathbb{R}$.

In \cite{carbone2007generic}, the authors proved that it can generate a minimal QMS under suitable conditions. The generic QMS has the very good property to map the diagonal part onto itself and the off-diagonal part onto itself as well. Moreover, when restricted to the diagonal, it generates a time continuous Markov chain $\left( X_t \right)_{t\ge0}$ with generator $\Gamma$ defined as:
\begin{equation*}
    \mathcal{L} \left( \sum_n f(n) | e_n\rangle \langle e_n| \right) = \sum_n ( \Gamma f ) (n) |e_n\rangle \langle e_n |
\end{equation*}
for any $f \in l^\infty (\mathbb{N})$ with $( \Gamma f ) (n) = \sum_j \gamma_{nj} f(j)$. 
\begin{example} \label{Ex_Birth_Death_Process}
    A classical example of time continuous Markov chain are birth and death processes. Consider the infinitesimal generator $\Gamma = \left( \gamma_{i,j} \right)_{i,j\ge0}$ and its associated transition matrix $P = \left( p_{i,j} \right)_{i,j\ge0}$ with $p_{i,j} = - \gamma_{i,j}/\gamma_{i,i}$ if $i\ne j$ and $\gamma_{i,i} < 0$, $p_{i,i} = 0$ if $\gamma_{i,i} < 0$ and $p_{i,j} = \delta_{i,j}$ if $\gamma_{i,i}=0$. It is a birth and death process if $p_{i,j} = 0$ as soon as $|i-j|\ge2$.
\end{example}

The following Lemma and Theorem shows how to use the classical probability theory to demonstrate that a generic QMS is ergodic:

\begin{lemma}[\cite{carbone2014asymptotic}, Lemma 2 (a)] \label{Lemma_Generic_QMS}
    Consider an off-diagonal operator $Z = \sum_{k\ne m} Z_{km} |e_k\rangle \langle e_m|$. If there exists $j \in \mathbb{N}$ such that $\gamma_{kk} \ne 0$ for all $k \ne j$, then $\lim_{t\to\infty} \Phi_t (Z) = 0$ and it happens exponentially fast.
\end{lemma}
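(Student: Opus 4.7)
The plan is to exploit the fact that the generator $\mathcal{L}$ leaves the off-diagonal part of $\mathcal{B(H)}$ invariant and acts diagonally on the rank-one basis $\{|e_k\rangle\langle e_m|:k\ne m\}$. This reduces the dynamics of $\Phi_t$ on off-diagonal operators to a family of scalar exponentials, which I can control through the sign of $\gamma_{kk}$.

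First I would verify that the dissipative (jump) part of $\mathcal{L}$ vanishes on an off-diagonal rank-one. Since $L_{nl} = \sqrt{\gamma_{nl}}\,|e_l\rangle\langle e_n|$, a direct computation gives
\begin{equation*}
    L_{nl}^* \,|e_k\rangle\langle e_m|\, L_{nl} = \gamma_{nl}\,\delta_{lk}\delta_{lm}\,|e_n\rangle\langle e_n|,
\end{equation*}
which is zero for every $n,l$ as soon as $k\ne m$. Only the drift $AG+G^*A$ contributes, and since $G$ is diagonal this yields the eigenvalue equation
\begin{equation*}
    \mathcal{L}(|e_k\rangle\langle e_m|) = \alpha_{km}\,|e_k\rangle\langle e_m|,\qquad \alpha_{km} := \tfrac{\gamma_{kk}+\gamma_{mm}}{2} + i(\kappa_k - \kappa_m).
\end{equation*}
Feeding this into the weak formulation (\ref{Eq_Generalized_QMS}) and invoking uniqueness of the minimal QMS, I conclude $\Phi_t(|e_k\rangle\langle e_m|) = e^{t\alpha_{km}}\,|e_k\rangle\langle e_m|$. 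Normality and linearity of $\Phi_t$ then extend this to $\Phi_t(Z) = \sum_{k\ne m} Z_{km}\,e^{t\alpha_{km}}\,|e_k\rangle\langle e_m|$ for every bounded off-diagonal $Z$.

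To finish I would use the hypothesis: at most one diagonal entry $\gamma_{jj}$ vanishes, so, for any $k\ne m$, at least one of $k,m$ differs from $j$ and hence at least one of $\gamma_{kk},\gamma_{mm}$ is strictly negative. Consequently $\mathrm{Re}(\alpha_{km}) = (\gamma_{kk}+\gamma_{mm})/2 < 0$ for every admissible pair, and each matrix coefficient of $\Phi_t(Z)$ decays exponentially. The main obstacle I expect is upgrading this coefficientwise exponential decay to the norm statement of the lemma: the scalars $|\gamma_{kk}|$ could a priori accumulate at zero as $k\to\infty$, so no uniform rate is available from the stated hypothesis alone. Resolving this either requires reading the conclusion in a weaker (weak$^*$ or strong operator) topology, which is all that Theorem \ref{Thm_Ergodic} needs downstream, or strengthening the hypothesis to $\inf_{k\ne j}|\gamma_{kk}| > 0$ and combining the diagonal action with the contractivity $\|\Phi_t\|\le 1$ to obtain a uniform operator-norm bound.
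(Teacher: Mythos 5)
The paper does not actually prove this lemma: it is imported verbatim from \cite{carbone2014asymptotic}, so there is no in-source proof to compare against. Your argument is the standard one and is essentially correct. The computation $L_{nl}^*\,|e_k\rangle\langle e_m|\,L_{nl}=\gamma_{nl}\delta_{lk}\delta_{lm}|e_n\rangle\langle e_n|$ is right, the drift gives the eigenvalue $\alpha_{km}=(\gamma_{kk}+\gamma_{mm})/2+i(\kappa_k-\kappa_m)$, and since every $\gamma_{kk}=-\sum_{j\ne k}\gamma_{kj}\le 0$ and at most one of them vanishes, $\Re(\alpha_{km})<0$ for every $k\ne m$. Two points to tighten: (i) the uniqueness clause of Theorem \ref{Thm_Existence_Minimal_QDS} concerns positive solutions, so to legitimize $\Phi_t(|e_k\rangle\langle e_m|)=e^{t\alpha_{km}}|e_k\rangle\langle e_m|$ you should verify the weak equation on positive operators supported on $\Span(e_k,e_m)$, or invoke the invariance of the off-diagonal subspace stated in the paper together with the diagonal action of the generator there; (ii) the passage from rank-one to general bounded off-diagonal $Z$ uses normality of $\Phi_t$ applied to the truncations $P_nZP_n\to Z$, which you invoke correctly.

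The caveat you raise at the end is not a defect of your proof but a genuine limitation of the statement as quoted. If the rates accumulate at zero, operator-norm convergence can fail outright: take $Z=\sum_k|e_k\rangle\langle e_{k+1}|$ and $\gamma_{kk}=-1/k$, so that $\Phi_t(Z)$ is a weighted shift with weights $e^{-t(1/k+1/(k+1))/2}$ and $\|\Phi_t(Z)\|\equiv 1$ for all $t$. Hence the limit must be read entrywise or in the weak$^*$ topology (which is all the ergodic theorem needs downstream), and a uniform exponential rate requires the additional hypothesis $\inf_{k\ne j}|\gamma_{kk}|>0$ that you identify. Your diagnosis of where the stated hypothesis stops short of the stated conclusion is exactly right.
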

As a result, we have:
\begin{theorem}[\cite{carbone2014asymptotic}, Theorems 15 \& 16]
    If $\left( X_t \right)_{t\ge0}$ is positive recurrent and if the condition in Lemma \ref{Lemma_Generic_QMS} holds, then the QMS is ergodic.
\end{theorem}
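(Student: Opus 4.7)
The plan is to verify characterisation (3) of Theorem \ref{Thm_Ergodic} by showing directly that $\frac{1}{t}\int_0^t \Phi_s(A)\,ds$ converges in the weak$^*$ topology for every $A\in\mathcal{B(H)}$. I would exploit the fundamental structural property of generic QMS recalled above: $\Phi_t$ preserves the decomposition of $\mathcal{B(H)}$ into its diagonal and off-diagonal parts. Writing $A=D_A+Z_A$ with $D_A=\sum_n f(n)|e_n\rangle\langle e_n|$ and $Z_A=\sum_{m\ne k}A_{mk}|e_m\rangle\langle e_k|$, one has $\Phi_t(A)=\Phi_t(D_A)+\Phi_t(Z_A)$ with each summand staying of the same type, so the two halves can be handled separately.

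For the off-diagonal part, Lemma \ref{Lemma_Generic_QMS} applies because the hypothesis on the $\gamma_{kk}$ is exactly the one assumed in the theorem, so $\Phi_t(Z_A)\to 0$ exponentially fast and $\frac{1}{t}\int_0^t\Phi_s(Z_A)\,ds\to 0$ even in operator norm. For the diagonal part, the defining identity $\mathcal{L}\bigl(\sum_n f(n)|e_n\rangle\langle e_n|\bigr)=\sum_n(\Gamma f)(n)|e_n\rangle\langle e_n|$ shows that the restriction of $\Phi_t$ to the diagonal algebra is isometric to the classical Markov semigroup $(T_t)$ of $(X_t)$ acting on $\ell^\infty(\mathbb{N})$. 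Positive recurrence of $(X_t)$ then yields, via the classical mean ergodic theorem for positive recurrent Markov semigroups, the pointwise convergence $\frac{1}{t}\int_0^t(T_s f)(n)\,ds\to (\mathcal{E}_{\mathrm{cl}}f)(n)$ for each $n$ and each bounded $f$, where $\mathcal{E}_{\mathrm{cl}}f$ is the conditional expectation of $f$ with respect to the $\sigma$-algebra of invariant events of $(X_t)$.

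To upgrade the pointwise convergence to weak$^*$ convergence in $\mathcal{B(H)}$, one tests against an arbitrary trace-class operator $X$: only the diagonal entries $(X_{nn})_{n\ge 0}$ matter, these form an $\ell^1$ sequence bounded by $\|X\|_1$, and the Cesàro means are uniformly bounded by $\|f\|_\infty$. Dominated convergence then delivers $\tr(X\Phi_{t,\text{avg}}(D_A))\to\sum_n X_{nn}(\mathcal{E}_{\mathrm{cl}}f)(n)$, hence the desired weak$^*$ limit. Combining both parts gives a well-defined $\mathcal{E}(A)=w^*\text{-}\lim\frac{1}{t}\int_0^t\Phi_s(A)\,ds$ on all of $\mathcal{B(H)}$, which is condition (3) of Theorem \ref{Thm_Ergodic} and hence implies ergodicity. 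The main obstacle I foresee is a clean invocation of the classical ergodic theorem for Markov semigroups on a countable state space without assuming irreducibility, since positive recurrence alone only guarantees convergence of Cesàro means to a projection onto (possibly non-constant) bounded $\Gamma$-harmonic functions; however this is precisely what is needed, and no stronger statement is required for weak$^*$ convergence after the dominated-convergence step above.
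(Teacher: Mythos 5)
The paper does not actually prove this statement---it is imported verbatim from \cite{carbone2014asymptotic}---but your argument reconstructs exactly the strategy that the surrounding text and the cited reference describe: split $A$ into its diagonal and off-diagonal parts (both preserved by the generic QMS), kill the off-diagonal Ces\`aro means via Lemma \ref{Lemma_Generic_QMS}, reduce the diagonal part to the classical mean ergodic theorem for the positive recurrent chain $\left( X_t \right)_{t\ge0}$, and glue by testing against trace-class operators with dominated convergence. This is correct; the only overstatement is that $\frac{1}{t}\int_0^t \Phi_s(Z_A)\,ds \to 0$ ``in operator norm''---the decay rates $(\gamma_{kk}+\gamma_{mm})/2$ of the off-diagonal entries need not be bounded away from zero, so the convergence is in general only entrywise and hence weak$^*$ (after approximating the trace-class test operator by its finite truncations), which is all your argument actually needs.
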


\begin{example} [following Example \ref{Ex_Birth_Death_Process}]
    Assume that the birth and death process is irreducible, which is equivalent to $p_{i,i+1} p_{i+1,i} >0$ for any $i \in \mathbb{N}$. Note that this also implies the assumption in Lemma \ref{Lemma_Generic_QMS}. Suppose that it has no finite explosion time, which means, using Reuter's criteria, that:
    \begin{equation*}
        \sum_{i\ge0} \frac{1}{\gamma_{i,i+1}} + \frac{\gamma_{i,i-1}}{\gamma_{i,i+1} \gamma_{i-1,i}} + \cdots + \frac{\gamma_{i,i-1} \cdots \gamma_{1,0}}{\gamma_{i,i+1}  \cdots \gamma_{0,1}} = \infty
    \end{equation*}
    Then, it is a classical exercise in probability to show that $\left( X_t \right)_{t\ge0}$ is positive recurrent if and only if
    \begin{equation*}
        \sum_{n\ge1} \prod_{i=1}^n \frac{p_{i,i-1}}{p_{i,i+1}} = \infty \ \ \mathrm{and} \ \ \sum_{n\ge1} \prod_{i=1}^n \frac{p_{i-1,i}}{p_{i,i-1}} < \infty
    \end{equation*}
    This way, we can directly infer the ergodic nature of the related generic QMS.
\end{example}

\subsection{Quantum harmonic oscillator}

A third and more direct way to prove that a QMS is ergodic is to find a Lyapunov function that will directly give the convergence towards $\mathcal{R}_+$. To illustrate this, let's consider a quantum harmonic oscillator with k-photon exchange as in \cite{azouit2016well}. This model is very useful because it can be used to stabilize entangled states through reservoir engineering. Again, the Hilbert space is $l^2(\mathbb{N})$. The generalised Lindbladian in the Schrödinger picture is defined as follows:
\begin{equation*}
    \mathcal{L}^*(\rho) = L \rho L^\dag - \frac{1}{2} L^\dag L \rho - \frac{1}{2} \rho L^\dag L
\end{equation*}
where $L = a^k - \alpha^k \mathds{1}$ with $\alpha\in\mathbb{R}$. In \cite{azouit2016well}, the authors proved that the Hille-Yosida theorem holds for this case which gives the existence of a semigroup $\Phi^*$ generated by $\mathcal{L}^*$. Moreover, using the Lyapunov function 
\begin{equation*}
    V(\rho) = \tr \left( L \rho L^\dag \right)
\end{equation*}
they proved that any normal state will be absorbed in large time by $\ker (L)$, a finite-dimensional subspace spanned by the \textit{coherent states}, see subsection 4.3 in \cite{robin2024convergence}. This way, we can deduce the following proposition:
\begin{proposition} \label{Prop_Q_Harmonic_Oscill}
    For the model of the quantum harmonic oscillator with k-photon exchange, we have:
    \begin{equation*}
        \mathcal{R}_+ = \ker (L) \ \ \mathrm{and} \ \ A (\mathcal{R}_+ ) = \mathds{1}
    \end{equation*}
\end{proposition}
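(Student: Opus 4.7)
The plan is to combine the Lyapunov absorption result recalled from \cite{azouit2016well} — that every normal state asymptotically becomes supported by $\ker(L)$ — with the algebraic observation that states supported in $\ker(L)$ are automatically $\Phi^*$-invariant, and then transfer the absorption statement into the Heisenberg-picture identity $A(\mathcal{R}_+) = \mathds{1}$ by duality.

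First I would establish $\ker(L) \subseteq \mathcal{R}_+$. If $\rho$ is a normal state with $\supp(\rho) \subseteq \ker(L)$, then $L\rho = 0$ and $\rho L^\dag = 0$, so every term of $\mathcal{L}^*(\rho) = L \rho L^\dag - \tfrac{1}{2} L^\dag L \rho - \tfrac{1}{2} \rho L^\dag L$ vanishes and $\rho$ is invariant. Since $\ker(L)$ is finite-dimensional (spanned by the coherent states with amplitude $\alpha\zeta_k^j$, $j=0,\dots,k-1$), the normalised projection $\rho_0 = P_{\ker(L)}/\dim\ker(L)$ is a well-defined normal invariant state with support exactly $\ker(L)$, and the inclusion follows from the definition of $\mathcal{R}_+$ as a supremum of supports of invariant states.

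For the reverse inclusion, let $\rho$ be any normal $\Phi^*$-invariant state. By invariance, $\Phi_t^*(\rho) = \rho$ for all $t \ge 0$, while the Lyapunov conclusion of \cite{azouit2016well} says that $\Phi_t^*(\rho)$ is asymptotically supported by $\ker(L)$, forcing $\supp(\rho) \subseteq \ker(L)$. Equivalently, $V(\rho) = \tr(\rho L^\dag L)$ is constant in time along invariant states while the theorem drives it to zero, so $\rho$ is supported on $\ker(L^\dag L) = \ker(L)$. Taking the supremum over invariant states yields $\mathcal{R}_+ \subseteq \ker(L)$, hence $\mathcal{R}_+ = \ker(L)$.

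Finally, the absorption identity follows by duality. For any normal state $\rho$,
\begin{equation*}
\tr \bigl( \rho\, \Phi_t(P_{\mathcal{R}_+}) \bigr) = \tr \bigl( \Phi_t^*(\rho)\, P_{\mathcal{R}_+} \bigr) \xrightarrow[t\to\infty]{} 1,
\end{equation*}
since $\Phi_t^*(\rho)$ is absorbed by $\mathcal{R}_+ = \ker(L)$; and because $\Phi_t(P_{\mathcal{R}_+})$ is monotone increasing with weak$^*$ limit $A(\mathcal{R}_+) \le \mathds{1}$, this pointwise convergence on a separating family of states forces $A(\mathcal{R}_+) = \mathds{1}$. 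The real obstacle — which is taken as an input here — is the step already handled in \cite{azouit2016well}: upgrading the scalar decay $V(\Phi_t^*(\rho)) \to 0$ into genuine support localization on $\ker(L)$, since $L^\dag L$ need not be bounded below on $\ker(L)^\perp$ and a LaSalle-type invariance argument combined with the finite-dimensionality of $\ker(L)$ is required.
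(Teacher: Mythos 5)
Your identification $\mathcal{R}_+ = \ker(L)$ follows the paper's argument exactly: states supported in $\ker(L)$ are invariant because every term of $\mathcal{L}^*$ vanishes, and conversely any invariant $\rho$ satisfies $V(\rho) = \lim_{t\to\infty} V(\Phi_t^*(\rho)) = 0$, forcing $\supp(\rho) \subseteq \ker(L)$. The duality step at the end is also sound: if $\tr\bigl(\Phi_t^*(\rho) P_{\mathcal{R}_+}\bigr) \to 1$ for every normal state, then $A(\mathcal{R}_+) = \mathds{1}$.

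The gap is in the middle of your last paragraph, and you half-admit it yourself: you invoke ``$\Phi_t^*(\rho)$ is absorbed by $\ker(L)$'' as a black box, but that support-localization statement is precisely what the proposition is supposed to extract from the Lyapunov decay --- the paper's stated contribution here is to \emph{show} that the Lyapunov function of \cite{azouit2016well} implies absorption. Moreover, your diagnosis of why this step is delicate is backwards: you write that ``$L^\dag L$ need not be bounded below on $\ker(L)^\perp$'' and that a LaSalle-type argument is needed, whereas the paper's proof rests on the opposite fact --- the spectrum of $\sqrt{L^\dag L}$ has \emph{no} accumulation point at zero (cited from \cite{azouit2016well}), so there is a constant $c \ge 0$ with $\mathds{1} - P_{\mathcal{R}_+} \le c\, L^\dag L$. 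This operator inequality converts the scalar decay $\tr\bigl(L^\dag L\, \Phi_t^*(\rho)\bigr) \to 0$ (established first for finite-rank states and extended by density) directly into $\tr\bigl(\Phi_t(\mathds{1} - P_{\mathcal{R}_+})\,\rho\bigr) \to 0$, and uniform boundedness plus weak$^*$-compactness of the balls of $\mathcal{B(H)}$ then gives $w^*\text{-}\lim_{t\to\infty} \Phi_t(\mathds{1} - P_{\mathcal{R}_+}) = 0$, i.e.\ $A(\mathcal{R}_+) = \mathds{1}$. No invariance-principle machinery is required; supplying this spectral-gap inequality is the missing content of your proof.
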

\begin{proof}
    Each state in $\ker (L)$ is an invariant state, so by definition $\ker (L) \subseteq \mathcal{R}_+$. Moreover, as for any normal invariant state, say $\rho$, $\lim_{t\to\infty} V(\Phi_t^* (\rho)) = 0$, this implies:
    \begin{equation*}
        V(\rho) = \lim_{t\to\infty} V(\Phi_t^* (\rho)) = 0
    \end{equation*}
    and necessarily $\mathcal{R}_+ \subseteq \ker (L)$. Thus the equality.

    On the other hand, for any finite-dimensional state $\rho$,
    \begin{equation*} \begin{split}
        0 = \lim_{t\to\infty} V(\Phi_t^* (\rho)) &= \lim_{t\to\infty} \tr \left( L^\dag L \Phi_t^*(\rho) \right)
    \end{split} \end{equation*}
    and this can be extended to any normal state by density. Moreover, the spectrum of $\sqrt{L^\dag L}$ has no accumulation point at zero, see \cite{azouit2016well}, which implies that there exists a constant $c\ge0$ such that $\mathds{1} - P_{\mathcal{R}_+} \le c L^\dag L$. Therefore,
    \begin{equation*} \begin{split}
        \tr \left( \Phi_t \left( \mathds{1} - P_{\mathcal{R}_+} \right) \rho \right) &= \tr \left( \left( \mathds{1} - P_{\mathcal{R}_+} \right) \Phi_t^* (\rho) \right) \\ &\le c \tr \left( L^\dag L \Phi_t^* (\rho) \right)
    \end{split} \end{equation*}
    So, $w^* - \lim_{t\to\infty} \tr \left( \Phi_t \left( \mathds{1} - P_{\mathcal{R}_+} \right) \rho \right) = 0$. Finally, because of uniform boundedness of $\left( \Phi_t \left( \mathds{1} - P_{\mathcal{R}_+} \right) \right)_{t\ge0}$ (by one) and weak$^*$-compactness of the balls of $\mathcal{B(H)}$,    
    \begin{equation*}
        w^*-\lim_{t\to\infty} \Phi_t \left( \mathds{1} - P_{\mathcal{R}_+} \right) = 0. 
    \end{equation*}
\end{proof}

\vspace{2mm}

However, it can be very difficult to find a Lyapunov function in general, as discussed in \cite{robin2024convergence}, where the authors study a model inspired by the quantum harmonic oscillator but for which is it much harder to find a suitable Lyapunov function. Nevertheless, the ergodic theorem can provide useful insights to find a good one. For example, sometimes, the function $\rho \to \tr \left( \left( \mathds{1} - P_{\mathcal{R}_+} \right) \rho \right)$ can already be a Lyapunov function that gives an explicit convergence speed.

\vspace{2mm}

In conclusion, we have studied three ways to show that $\mathcal{R}_+$ is attractive: using the fixed points of the two semigroups $\Phi$ and $\Phi^*$, using the restriction to the diagonal to apply classical probability and using a Lyapunov function to infer the result directly.


\section{Decomposition of the positive recurrent subspace} \label{Section_Decomp}

\subsection{Decomposition into minimal enclosures}

In the previous sections, we examined the convergence towards the positive recurrent subspace $\mathcal{R}_+$. A natural question then is: what happens inside this subspace? How does the semigroup behave? In particular, are there some subspaces inside $\mathcal{R}_+$ that are invariant with regard to the QMS, as is $\mathcal{R}_+$ \cite{carbone2021absorption}. These questions can be answered using the decomposition of $\mathcal{R}_+$ into minimal invariant enclosures developed by Baumgartner and Narnhofer for time continuous evolutions in finite-dimensional Hilbert spaces \cite{baumgartner2012structures}. Then, in \cite{carbone2016irreducible}, Carbone and Pautrat extended it to infinite-dimensional separable Hilbert spaces for discrete time evolution. We adapted their formalism, which is slightly different from the one in \cite{baumgartner2012structures}, in continuous time for finite-dimensional Hilbert spaces in \cite{mousset2025decomposition}. The generalisation to separable Hilbert spaces in infinite dimension is very similar and will be detailed in a future work.

\vspace{2mm}

Consider a QMS $\Phi$ on the separable Hilbert space $\mathcal{H}$. Before presenting the decomposition of its positive recurrent subspace $\mathcal{R}_+$, we need to clarify some notions.
\begin{definition}
    A subspace $\mathcal{V}$ of $\mathcal{H}$ is an \textit{enclosure} if for any normal state $\rho$ with support in $\mathcal{V}$, for all $t \in \mathbb{R}_+$, $\Phi_t (\rho)$ has also support in $\mathcal{V}$.
\end{definition}
Note that enclosures are also called \textit{invariant subspaces} in the literature. The decomposition of $\mathcal{R}_+$ is based on a certain type of enclosures called \textit{minimal enclosures} or \textit{minimal invariant subspaces}:
\begin{definition}
    A non-trivial enclosure is a \textit{minimal enclosure} if the only non-trivial enclosure it contains is itself.
\end{definition}
Enclosures and invariant states are closely related, since every enclosure inside of $\mathcal{R}_+$ is the support of an invariant state \cite{carbone2016irreducible}. In particular:
\begin{proposition}[\cite{carbone2016irreducible}, Proposition 5.7]
    A subspace of $\mathcal{R}_+$ is a minimal enclosure if and only if it is the support of an extremal invariant state, \textit{i.e}, of an extremal point of the convex set of normal invariant states of $\mathcal{R}_+$.
\end{proposition}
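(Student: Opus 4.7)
The plan is to prove both directions using the interplay between minimal enclosures and extremality, with irreducibility serving as the key bridge.

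For the forward direction ($\Rightarrow$), assume that $\mathcal{V} \subseteq \mathcal{R}_+$ is a minimal enclosure. Since $\mathcal{V} \subseteq \mathcal{R}_+$, there exists at least one normal invariant state $\rho$ whose support lies in $\mathcal{V}$ (this is the defining property of $\mathcal{R}_+$). Because the support of any normal invariant state is itself an enclosure, and $\supp(\rho) \subseteq \mathcal{V}$, the minimality of $\mathcal{V}$ forces $\supp(\rho) = \mathcal{V}$. I would then argue that the restriction of $\Phi$ to $\mathcal{B}(\mathcal{V})$ is irreducible in the sense of Davies: since $\mathcal{V}$ is minimal, this restricted QMS admits no non-trivial invariant subspace, and classical results (e.g. Frigerio, Fagnola) ensure that such an irreducible QMS with a faithful normal invariant state has a \emph{unique} such state. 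Any convex decomposition $\rho = \lambda \sigma_1 + (1-\lambda)\sigma_2$ with $\sigma_1, \sigma_2$ normal invariant states must have $\supp(\sigma_i) \subseteq \mathcal{V}$, and by minimality $\supp(\sigma_i) = \mathcal{V}$; uniqueness of the faithful invariant state then gives $\sigma_1 = \sigma_2 = \rho$, so $\rho$ is extremal.

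For the reverse direction ($\Leftarrow$), assume $\rho$ is extremal in the convex set of normal invariant states and set $\mathcal{V} := \supp(\rho)$. Then $\mathcal{V}$ is an enclosure (the support of an invariant state always is). Suppose toward contradiction that there exists a non-trivial enclosure $\mathcal{W} \subsetneq \mathcal{V}$. I would then produce two distinct normal invariant states whose convex combination equals $\rho$, contradicting extremality. The mechanism is the following block-decomposition lemma: for any enclosure $\mathcal{W}$ and any normal invariant state $\rho$, one has $P_{\mathcal{W}} \rho P_{\mathcal{W}^\perp} = 0$, and both $P_{\mathcal{W}} \rho P_{\mathcal{W}}$ and $P_{\mathcal{W}^\perp} \rho P_{\mathcal{W}^\perp}$ are (unnormalized) invariant. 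Applying this with $\mathcal{W} \subsetneq \mathcal{V} = \supp(\rho)$ guarantees that both blocks are nonzero, giving the non-trivial convex decomposition
\begin{equation*}
    \rho = \lambda \, \sigma_1 + (1 - \lambda) \, \sigma_2,
\end{equation*}
with $\lambda = \tr(P_{\mathcal{W}} \rho) \in (0,1)$ and $\sigma_1, \sigma_2$ the normalized blocks, contradicting the extremality of $\rho$.

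The main obstacle is the block-decomposition lemma invoked in the reverse direction: establishing that for any enclosure $\mathcal{W}$, the off-diagonal block $P_{\mathcal{W}} \rho P_{\mathcal{W}^\perp}$ of any normal invariant $\rho$ vanishes, and that the diagonal blocks are individually invariant. In finite dimension this follows from the subharmonicity $\Phi_t(P_{\mathcal{W}}) \geq P_{\mathcal{W}}$ combined with a Cauchy-Schwarz argument applied to $2\tr(\Phi_t^*(\rho) P_{\mathcal{W}}) = 2\tr(\rho \, \Phi_t(P_{\mathcal{W}}))$, exploiting invariance $\Phi_t^*(\rho) = \rho$ to force $\Phi_t(P_{\mathcal{W}}) \rho = P_{\mathcal{W}} \rho$; I would need to verify carefully that the same argument carries over in the separable infinite-dimensional setting, using weak$^*$-limits and trace-class continuity as in the proof of Proposition \ref{Prop_R_attractive}. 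Granting this lemma, the two directions close up cleanly.
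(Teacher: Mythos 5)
The paper itself gives no proof of this proposition (it is imported verbatim from \cite{carbone2016irreducible}), so your argument has to stand on its own. Your forward direction is essentially sound, with one small imprecision: the existence of a normal invariant state supported in a given enclosure $\mathcal{V}\subseteq\mathcal{R}_+$ is not ``the defining property of $\mathcal{R}_+$'' but the nontrivial fact, quoted in the paper just before the proposition, that every enclosure inside $\mathcal{R}_+$ is the support of an invariant state. Granting that, minimality of $\mathcal{V}$, irreducibility of the restricted semigroup, and uniqueness of its faithful normal invariant state do yield extremality as you describe.

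The reverse direction has a genuine gap: the block-decomposition lemma you rely on is false, and the paper's own two-photon example with $\lambda=0$ refutes it. There, $\mathcal{W}=\Span\{|e_0\rangle\}$ is an enclosure (it is the support of the invariant state $|e_0\rangle\langle e_0|$), while any invariant state $\rho$ with $\alpha\in(0,1)$ and $0<|z|^2<\alpha(1-\alpha)$ has $\supp(\rho)=\Span\{|e_0\rangle,|e_1\rangle\}\supsetneq\mathcal{W}$ and $P_{\mathcal{W}}\rho P_{\mathcal{W}^\perp}=z|e_0\rangle\langle e_1|\neq0$. Your subharmonicity-plus-Cauchy--Schwarz argument does prove something true, namely $\Phi_t(P_{\mathcal{W}})\rho=P_{\mathcal{W}}\rho$; this places $P_{\mathcal{W}}$ in the multiplicative domain of $\Phi_t$ restricted to $\supp(\rho)$ and shows that the two diagonal blocks are invariant, but it does not make the off-diagonal block vanish --- that block is itself an invariant trace-class operator. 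Consequently your $\lambda\sigma_1+(1-\lambda)\sigma_2$ equals the pinched state $P_{\mathcal{W}}\rho P_{\mathcal{W}}+P_{\mathcal{W}^\perp}\rho P_{\mathcal{W}^\perp}$ rather than $\rho$, and no contradiction with extremality of $\rho$ follows. (In the example the offending $\rho$ is indeed non-extremal, so the proposition survives, but your proof does not establish it.) The standard repair argues through the fixed-point algebra: since $\rho$ is faithful on $\mathcal{V}=\supp(\rho)$, subharmonic projections dominated by $P_{\mathcal{V}}$ are harmonic, so a nontrivial enclosure $\mathcal{W}\subsetneq\mathcal{V}$ produces a nontrivial projection in the fixed-point von Neumann algebra $\mathcal{N}$ of $\Phi|_{\mathcal{V}}$; the map $\sigma\mapsto\tr(\sigma\,\cdot)|_{\mathcal{N}}$ is an affine bijection between normal invariant states on $\mathcal{V}$ and normal states of $\mathcal{N}$, and a normal state that is simultaneously faithful and pure forces $\mathcal{N}=\mathbb{C}P_{\mathcal{V}}$, a contradiction. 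Some argument of this kind, handling the invariant off-diagonal part, is needed in place of the claimed vanishing of $P_{\mathcal{W}}\rho P_{\mathcal{W}^\perp}$.
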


As in \cite{baumgartner2012structures} and \cite{carbone2016irreducible}, the decomposition is the following:

\begin{theorem} \label{Thm_Decomp_Min_Encl}
    There exists a decomposition of $\mathcal{R}_+$ of the form:
    \begin{equation} \label{Eq_Decomp_Hilbert_Space}
        \mathcal{R}_+ = \sum_{\alpha\in A} \mathcal{V}_a \oplus \sum_{\beta \in B} \sum_{\gamma \in C_\beta} \mathcal{V}_{\beta, \gamma}
    \end{equation}
    where $A$ or $B$ can be empty, $C_\beta$ contains at least two elements, and:
    \begin{itemize}
        \item All the $\mathcal{V}_\alpha$ and $\mathcal{V}_{\beta, \gamma}$ are mutually orthogonal minimal enclosures;
        \item Any minimal enclosure that is not orthogonal to $\sum_{\gamma \in C_\beta} \mathcal{V}_{\beta, \gamma}$ is contained in $\sum_{\gamma \in C_\beta} \mathcal{V}_{\beta, \gamma}$.
    \end{itemize}
\end{theorem}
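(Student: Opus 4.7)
The strategy is to extract, via Zorn's lemma, a maximal family of mutually orthogonal minimal enclosures that exhausts $\mathcal{R}_+$, and then to cluster them according to an equivalence relation encoding shared non-orthogonal minimal enclosures. The structural bridge throughout is Proposition 5.7 of \cite{carbone2016irreducible}: minimal enclosures inside $\mathcal{R}_+$ are exactly the supports of extremal normal invariant states. First, consider the collection of pairwise orthogonal families of minimal enclosures of $\mathcal{R}_+$, ordered by inclusion; any chain admits its union as an upper bound, so Zorn's lemma provides a maximal family $\{\mathcal{V}_i\}_{i \in I}$.

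Next, I would show that $\bigoplus_{i \in I} \mathcal{V}_i = \mathcal{R}_+$. Set $\mathcal{W} := \mathcal{R}_+ \ominus \bigoplus_{i \in I} \mathcal{V}_i$. Since each $\mathcal{V}_i$ is an enclosure, so is their direct sum, and its orthogonal complement inside the enclosure $\mathcal{R}_+$ is again an enclosure. If $\mathcal{W}$ were non-trivial, the convex set of normal $\Phi$-invariant states supported in $\mathcal{W}$ would be non-empty (as $\mathcal{W} \subseteq \mathcal{R}_+$) and weak$^*$-compact, so Krein--Milman would yield an extremal such state, whose support is a minimal enclosure orthogonal to every $\mathcal{V}_i$, contradicting maximality.

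It then remains to cluster the family. Define a relation on $I$ by $i \approx j$ whenever there exists a minimal enclosure $\mathcal{U}$ non-orthogonal to both $\mathcal{V}_i$ and $\mathcal{V}_j$, and let $\sim$ be the equivalence relation generated by $\approx$. Index the singleton $\sim$-classes by $A$ and the classes of cardinality at least two by $B$, so that the $\beta$-th large class has elements indexed by $C_\beta$ with $|C_\beta| \geq 2$; mutual orthogonality is then built in from the first step. For the absorption property, let $\mathcal{U}$ be a minimal enclosure not orthogonal to $\sum_{\gamma \in C_\beta} \mathcal{V}_{\beta, \gamma}$ and pick $\gamma_0 \in C_\beta$ with $\mathcal{U} \not\perp \mathcal{V}_{\beta, \gamma_0}$. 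The previous step forces $\mathcal{U} \subseteq \bigoplus_{j \in J} \mathcal{V}_j$ with $J := \{j \in I : \mathcal{U} \not\perp \mathcal{V}_j\}$; by construction of $\approx$, every $j \in J$ is $\approx$-related to $\gamma_0$ through $\mathcal{U}$, so $j$ lies in the same $\sim$-class, i.e.\ $j \in C_\beta$, which yields $\mathcal{U} \subseteq \sum_{\gamma \in C_\beta} \mathcal{V}_{\beta, \gamma}$.

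The main obstacle is the exhaustion argument: one must verify that every non-trivial enclosure contained in $\mathcal{R}_+$ carries a normal invariant state and that the set of such states is weak$^*$-compact, so that Krein--Milman applies. This is the continuous-time analogue of results in \cite{carbone2016irreducible} and relies on $\mathcal{R}_+$ being the supremum of supports of invariant states, together with the normality and weak$^*$-continuity axioms of a QMS. A secondary technical point in the absorption step is the justification that $\mathcal{U} \subseteq \bigoplus_{j \in J} \mathcal{V}_j$: it follows from the decomposition established in the second step combined with minimality of $\mathcal{U}$, which prevents the orthogonal projections of states on $\mathcal{U}$ onto the $\mathcal{V}_i$ with $i \notin J$ from being non-zero.
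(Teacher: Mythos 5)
The paper itself offers no proof of Theorem \ref{Thm_Decomp_Min_Encl}: it imports the statement from \cite{baumgartner2012structures} and \cite{carbone2016irreducible} and explicitly defers the infinite-dimensional, continuous-time details to future work, so there is no in-paper argument to compare against. Your overall architecture (Zorn's lemma for a maximal mutually orthogonal family of minimal enclosures, exhaustion of $\mathcal{R}_+$, then clustering via the equivalence relation generated by ``shares a non-orthogonal minimal enclosure'') is exactly the route taken in that literature, and your clustering step is in fact cleaner than you suggest: the containment $\mathcal{U} \subseteq \bigoplus_{j \in J} \mathcal{V}_j$ needs no minimality of $\mathcal{U}$ at all, since $\mathcal{U} \subseteq \mathcal{R}_+ = \bigoplus_{i\in I}\mathcal{V}_i$ together with $\mathcal{U} \perp \mathcal{V}_i$ for all $i \notin J$ already forces it by Hilbert-space geometry. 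The singleton classes are then automatically ``isolated'' by the same argument, which matches the remark following the theorem.

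The genuine gap is the Krein--Milman step, and you have correctly located it but underestimated it: the set of normal $\Phi$-invariant states supported in $\mathcal{W}$ is \emph{not} weak$^*$-compact in general. In the topology $\sigma(\mathcal{T}(\mathcal{H}), \mathcal{K}(\mathcal{H}))$ (for which Banach--Alaoglu gives compactness of the unit ball of the trace class) the normal state space fails to be closed --- already for the identity semigroup, $|e_n\rangle\langle e_n| \to 0$ --- while in $\sigma(\mathcal{T}(\mathcal{H}), \mathcal{B}(\mathcal{H}))$ or in trace norm the set is closed and bounded but not compact. So the existence of an extremal normal invariant state supported in a given enclosure cannot be read off from Krein--Milman; it is precisely the hard point of \cite{carbone2016irreducible}, where it is obtained instead from the correspondence between normal invariant states supported in $\mathcal{R}_+$ and the fixed-point algebra of the semigroup restricted to $\mathcal{R}_+$. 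That algebra is atomic because it is the range of a normal conditional expectation (Theorem \ref{Thm_Ergodic} applies to the restriction, for which $A(\mathcal{R}_+) = \mathds{1}$ holds trivially, combined with the fact that a von Neumann subalgebra of $\mathcal{B(H)}$ admitting a normal conditional expectation is atomic), and its minimal projections produce the required extremal states. A secondary point you should also make explicit: the claim that the orthogonal complement of an enclosure \emph{within} $\mathcal{R}_+$ is again an enclosure relies on the existence of a faithful normal invariant state for the restricted semigroup, which is where separability of $\mathcal{H}$ enters (take a countable convex combination of invariant states whose supports exhaust $\mathcal{R}_+$).
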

\begin{remark}
    This decomposition is generally not unique. $A$ and the $\mathcal{V}_\alpha$ correspond to minimal enclosures that appear in each possible decomposition. On the other hand, $B$ and $\sum_{\gamma \in C_\beta} \mathcal{V}_{\beta, \gamma}$ will always appear as well, but the choice of the $\mathcal{V}_{\beta, \gamma}$ is not unique.
\end{remark}

We can apply this decomposition to the previous examples.
\begin{example}
    Consider again the two-photon absorption and emission model. If $\lambda > 0$, there are only two extremal invariant states: $\rho_e$ and $\rho_o$. Thus, the decomposition of $\mathcal{R}_+$ in this case simply is:
    \begin{equation*}
        \mathcal{R}_+ = \supp(\rho_e) \oplus \supp(\rho_o)
    \end{equation*}
    On the other hand, if $\lambda=0$, any rank one state with support in $\Span( |e_0 \rangle, |e_1\rangle)$ is invariant. Therefore, there is an infinite number of extremal invariant states and an infinite number of decompositions of $\mathcal{R}_+$. A possible one is:
    \begin{equation*}
        \mathcal{R}_+ = \Span( |e_0\rangle) \oplus \Span ( |e_1 \rangle)
    \end{equation*}
\end{example}

\begin{example} \label{Ex_Birth_Death_Decomp}
    Consider the birth and death process of Example \ref{Ex_Birth_Death_Process} and assume that it is irreducible positive recurrent with no finite explosion time. This way, it has exactly one invariant probability measure $\pi$ on $\mathbb{N}$ given by:
    \begin{equation*}
        \pi (0) = \frac{1}{S} \ \ \mathrm{and} \ \ \pi (n) = \frac{1}{S} \prod_{k=1}^n \frac{p_{k-1,k}}{p_{k,k-1}} \ \ \mathrm{for} \ \ n \ge 1
    \end{equation*}
    with $S = 1 + \sum_{n\ge1} \prod_{1\le k \le n}  \frac{p_{k-1,k}}{p_{k,k-1}}$. Then, by the third item of Theorem 16 in \cite{carbone2014asymptotic}, any associated generic QMS will have exactly one invariant state $\rho_{inv}$ which will be diagonal and such that $\langle e_n | \rho_{inv} | e_n \rangle = \pi (n)$. Therefore, the only invariant state is faithful. In this case, we say that the QMS is \textit{irreducible} and $\mathcal{R}_+ = l^2 (\mathbb{N})$ is the only possible decomposition of the positive recurrent subspace.
\end{example}

\begin{example}
    Finally, consider the quantum harmonic oscillator with k-photon exchange model. Since the generalised Lindbladian only depends on $L$ and since $\mathcal{R}_+ = \ker(L)$, we can directly see that any rank one state with support in $\ker(L)$ is an extremal invariant state. Therefore, there is again an infinite number of decompositions of $\mathcal{R}_+$. A possible one is:
    \begin{equation*}
        \mathcal{R}_+ = \bigoplus_{r=1}^k | \psi^r_L \rangle
    \end{equation*}
    where $\left\{ | \psi^r_L \rangle \right\}_{1\le r \le k}$ is an orthogonal basis of $\ker(L)$, see section 4.3 of \cite{robin2024convergence}.
\end{example}

\subsection{Link with the NFD and global asymptotic stability}

There exist other possible decompositions of the Hilbert space associated to a QMS. In finite dimension, Ticozzi and Cirillo proposed two other decompositions : the \textit{nested-face decomposition} (NFD) and the \textit{dissipation-induced decomposition} (\textit{DID}) \cite{cirillo2015decompositions}. We will only discuss about the first one, as it is close to the decomposition into minimal enclosures described earlier.

In discrete time, meaning that the adjoint semigroup is made of powers of a quantum channel $T$, the NFD is defined as follows: consider an invariant subspace $\mathcal{H}_{S_1} \subseteq \mathcal{H}$ and its orthogonal $\mathcal{H}_{R_1} = \mathcal{H}_{S_1}^\perp$. Denote $\mathfrak{H}_{R_1}^+$ the space of density matrices with support equal to $R_1$, $P_{R_1}$ the projection onto $\mathcal{H}_{R_1}$ and define $T_{R_1} := P_{R_1} T (\cdot) P_{R_1}$ acting on matrices with support in $R_1$. Then, define
\begin{equation*}
    \mathcal{D}_2 := \ker \left( \left( T_{R_1} - \sigma \left( T_{R_1} \right) \mathds{1}_{\mathcal{H}_{R_1}} \right)^{d_1^2} \right)
\end{equation*}
where $\sigma(T_{R_1})$ is the spectral radius of $T_{R_1}$ and $d_1$ the dimension of $\mathcal{H}_{R_1}$. Then, define $\mathcal{H}_{T_1}$ and $\mathcal{H}_{S_2}$ as
\begin{equation*}
    \mathcal{H}_{T_1} := \supp (D_2)
\end{equation*}
\begin{equation*}
    \mathcal{H}_{S_2} := \mathcal{H}_{S_1} \oplus \mathcal{H}_{T_1}
\end{equation*}
and repeat the process with $\mathcal{H}_{R_2} = \mathcal{H}_{S_2}^\perp$ and so on and so forth until the entire Hilbert space has been processed. Note that this construction can be adapted to continuous time by considering the spectrum of the Lindbladian generating the semigroup instead of the one of $T$. We will not discuss it here, as it is beyond the scope of this article. However, the following two propositions remain valid in the continuous-time setting. By construction, the NFD has the following property:
\begin{proposition}[\cite{cirillo2015decompositions}, Proposition 2]
    \begin{equation*}
        \sigma \left( T_{R_i} \right) > \sigma \left( T_{R_{i+1}} \right)
    \end{equation*}
\end{proposition}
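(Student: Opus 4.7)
The plan is to exploit the very definition of $\mathcal{H}_{T_i}$ as the support of the generalized eigenspace $\mathcal{D}_{i+1}$ of the superoperator $T_{R_i}$ at its spectral radius $\lambda_i := \sigma(T_{R_i})$. Since the NFD is stated in finite dimension, $T_{R_i}$ admits a Jordan decomposition on the space of matrices supported in $\mathcal{H}_{R_i}$, which splits as $\mathcal{D}_{i+1} \oplus \mathcal{D}_{i+1}'$, where $\mathcal{D}_{i+1}'$ is the direct sum of the generalized eigenspaces attached to the other eigenvalues of $T_{R_i}$, all of modulus strictly less than $\lambda_i$. The strategy is to show that $T_{R_{i+1}}$ is spectrally equivalent to $T_{R_i}|_{\mathcal{D}_{i+1}'}$.

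First I would verify that $\mathcal{H}_{T_i} = \supp(\mathcal{D}_{i+1})$ is itself an enclosure for the reduced dynamics, \textit{i.e.}, that any matrix supported on $\mathcal{H}_{T_i}$ stays supported on $\mathcal{H}_{T_i}$ under $T_{R_i}$. This would follow from the Perron--Frobenius-type result for completely positive maps: $T_{R_i}$ admits a positive eigenoperator $X^\star \in \mathcal{D}_{i+1}$ at eigenvalue $\lambda_i$ whose support is precisely $\mathcal{H}_{T_i}$; the complete positivity of $T_{R_i}$ combined with the block structure induced by the enclosure $\mathcal{H}_{S_i}$ then forces matrices $\mathcal{H}_{T_i}$-supported to remain so.

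Next, I would decompose $\mathcal{H}_{R_i} = \mathcal{H}_{T_i} \oplus \mathcal{H}_{R_{i+1}}$ and, for $M$ supported on $\mathcal{H}_{R_{i+1}}$, split $T_{R_i}(M)$ along the $2 \times 2$ block structure. The off-diagonal blocks are annihilated by the projection $P_{R_{i+1}}(\cdot)P_{R_{i+1}}$, and the $\mathcal{H}_{T_i}$-block is also projected out, so $T_{R_{i+1}}(M)$ is exactly the $\mathcal{H}_{R_{i+1}}$-block of $T_{R_i}(M)$. Iterating, any spectral value of $T_{R_{i+1}}$ lifts to an eigenvalue of $T_{R_i}$ realized on a generalized eigenoperator whose $\mathcal{H}_{R_{i+1}}$-block is nonzero, which forces it to belong to $\mathcal{D}_{i+1}'$ and hence to have modulus strictly less than $\lambda_i$.

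The hard part will be the block-triangular argument in the previous paragraph: a priori $\mathcal{H}_{R_{i+1}} = \mathcal{H}_{S_i}^\perp \cap \mathcal{H}_{T_i}^\perp$ need not be an enclosure of $T$, so $T(M)$ can acquire nontrivial components in $\mathcal{B}(\mathcal{H}_{T_i})$ and in off-diagonal blocks that the naive projection discards. One must argue that any putative eigenvector of $T_{R_{i+1}}$ at an eigenvalue of modulus $\geq \lambda_i$ would lift, through the block-triangular structure, to an eigenoperator of $T_{R_i}$ at the same eigenvalue, hence to an element of $\mathcal{D}_{i+1}$, whose support would lie in $\mathcal{H}_{T_i}$ and contradict the support in $\mathcal{H}_{R_{i+1}} \perp \mathcal{H}_{T_i}$. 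Establishing this lift rigorously, and in particular ruling out Jordan-block mixing that could inflate the spectral radius after projection, is the technical heart of the proof.
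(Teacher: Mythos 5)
First, a point of comparison: the paper does not actually prove this statement; it is imported verbatim from \cite{cirillo2015decompositions} (Proposition~2) with no argument given, so there is no in-paper proof to measure yours against. Taken on its own terms, your overall strategy --- show that $\mathcal{H}_{T_i}=\supp(\mathcal{D}_{i+1})$ is an enclosure for $T_{R_i}$, deduce from the resulting block-triangular form of the Kraus operators that the corner map $T_{R_{i+1}}$ has spectrum contained in that of $T_{R_i}$, and then exclude the top of the spectrum from the corner block --- is the right one and is essentially how the cited result is established.

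There is, however, a genuine gap at the very first step. You assert that the complementary spectral subspace $\mathcal{D}_{i+1}'$ carries only eigenvalues of modulus strictly less than $\lambda_i=\sigma\left(T_{R_i}\right)$. This is false in general: a completely positive, trace-non-increasing map can have peripheral eigenvalues $\lambda_i e^{\imath\theta}$ with $\theta\neq0$ (for instance, a compression whose action on a two-dimensional block is a swap damped by a uniform leak to $\mathcal{H}_{S_i}$ has both $1-\epsilon$ and $-(1-\epsilon)$ in its spectrum). Since $\mathcal{D}_{i+1}$ is by definition only the generalized eigenspace at the real value $\lambda_i$, these peripheral generalized eigenspaces land in $\mathcal{D}_{i+1}'$, and your lifting argument then only excludes the eigenvalue $\lambda_i$ itself from the spectrum of $T_{R_{i+1}}$, not the other spectral points of modulus $\lambda_i$; the strict inequality $\sigma\left(T_{R_{i+1}}\right)<\sigma\left(T_{R_i}\right)$ does not yet follow. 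To close this you need the structure theorem for the peripheral spectrum of positive maps: every generalized eigenvector at an eigenvalue of modulus $\sigma\left(T_{R_i}\right)$ is supported inside the support of the maximal positive eigenvector at $\sigma\left(T_{R_i}\right)$, hence inside $\mathcal{H}_{T_i}$, so that the entire peripheral circle disappears from the corner block. Separately, you correctly identify but do not actually carry out the two load-bearing steps (invariance of $\supp(\mathcal{D}_{i+1})$ under $T_{R_i}$, and the lift of eigenvectors of $T_{R_{i+1}}$ to generalized eigenvectors of $T_{R_i}$ at the same eigenvalue), so as written this is a plausible plan rather than a complete proof.
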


In particular, the NFD is very relevant to discuss about \textit{global asymptotic stability} (\textit{GAS}) of an invariant subspace, which is defined as follows:
\begin{definition}
    An invariant subspace $\mathcal{H}_S$ is GAS if for any density matrix $\rho$,
    \begin{equation*}
        \lim_{t\to\infty} \left| \left| \Phi_t^* (\rho) - P_S \Phi_t^* (\rho) P_S \right| \right| = 0
    \end{equation*}
    with $P_S$ the projection onto $\mathcal{H}_S$.
\end{definition}
\begin{proposition}[\cite{cirillo2015decompositions}, Proposition 3]
    $\mathcal{H}_{S_1}$ is GAS if and only if $\sigma \left( T_{R_1} \right) < 1$. If it is not the case, $\mathcal{H}_{S_2}$ is the minimal GAS subspace containing $\mathcal{H}_{S_1}$.
\end{proposition}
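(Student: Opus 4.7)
The plan rests on the block-triangular structure that invariance of $\mathcal{H}_{S_1}$ imposes on $T$. I would first pick a Kraus representation $T(\cdot) = \sum_i K_i (\cdot) K_i^*$ and observe that invariance of $\mathcal{H}_{S_1}$ under the Schrödinger evolution $T^*$ is equivalent to each Kraus operator having the block form
\[
K_i = \begin{pmatrix} K_i^{SS} & K_i^{SR} \\ 0 & K_i^{RR} \end{pmatrix}
\]
in the decomposition $\mathcal{H} = \mathcal{H}_{S_1} \oplus \mathcal{H}_{R_1}$. A short block computation then shows that $P_{R_1} T(\rho) P_{R_1} = T_{R_1}(P_{R_1} \rho P_{R_1})$: the $SS$, $SR$ and $RS$ parts of $\rho$ contribute zero to the $R_1 R_1$ block. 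Iterating, $P_{R_1} T^n(\rho) P_{R_1} = T_{R_1}^n (P_{R_1} \rho P_{R_1})$.

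Next, expanding $\Phi^*_t(\rho) - P_S \Phi^*_t(\rho) P_S$ in blocks and using the Cauchy--Schwarz bound $\|\rho_{SR}^{(n)}\|_1^2 \le \tr(\rho_{SS}^{(n)})\,\tr(\rho_{RR}^{(n)})$, GAS of $\mathcal{H}_{S_1}$ is equivalent to $\tr(T_{R_1}^n (P_{R_1} \rho P_{R_1})) \to 0$ for every initial $\rho$. In finite dimension this is equivalent to $T_{R_1}^n \to 0$ in norm, which is in turn equivalent to $\sigma(T_{R_1}) < 1$. Since $T_{R_1}$ is a completely positive subchannel one always has $\sigma(T_{R_1}) \le 1$, so failure of GAS amounts to $\sigma(T_{R_1}) = 1$. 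This settles the first half of the proposition.

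For the second claim, I would use the finite-dimensional spectral decomposition of $T_{R_1}$ to split $\mathcal{B}(\mathcal{H}_{R_1})$ into $\mathcal{D}_2$, the generalized eigenspace for the peripheral eigenvalues (those of modulus $\sigma(T_{R_1}) = 1$), and a complementary $T_{R_1}$-invariant subspace on which the spectral radius is strictly less than $1$. Perron--Frobenius-type facts for CP maps let me argue that $\mathcal{H}_{T_1} = \supp(\mathcal{D}_2)$ is compatible with the block structure of $T$, so that $\mathcal{H}_{S_2} = \mathcal{H}_{S_1} \oplus \mathcal{H}_{T_1}$ is itself an enclosure. Applying the first half to $\mathcal{H}_{S_2}$, the new map $T_{R_2}$ is identified with the restriction of $T_{R_1}$ to its strictly-contractive invariant complement, hence $\sigma(T_{R_2}) < 1$ and $\mathcal{H}_{S_2}$ is GAS. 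For minimality, if $\mathcal{H}_{S'}$ is any GAS enclosure with $\mathcal{H}_{S_1} \subseteq \mathcal{H}_{S'}$ but $\mathcal{H}_{T_1} \not\subseteq \mathcal{H}_{S'}$, then a peripheral fixed state of $T_{R_1}$ supported in $\mathcal{H}_{T_1}$ with nontrivial projection on $(\mathcal{H}_{S'})^\perp$ provides an initial condition whose $R'$-mass does not decay, contradicting GAS of $\mathcal{H}_{S'}$.

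The main obstacle is the step that $\mathcal{H}_{S_2}$ is actually a $T$-invariant enclosure. The set $\mathcal{D}_2$ is defined purely spectrally at the level of the restricted map $T_{R_1}$, and it is not obvious a priori that its geometric support on $\mathcal{H}$ is closed under the full channel $T$. One needs the CP-specific input that the peripheral part of a CP subchannel is generated by positive fixed points (so Jordan blocks contribute in a controlled way) in order to translate the algebraic stability of $\mathcal{D}_2$ under $T_{R_1}$ into true geometric invariance of $\mathcal{H}_{S_2}$ under $T$. Once that ingredient is in place, everything else reduces to elementary block-matrix bookkeeping and finite-dimensional spectral theory.
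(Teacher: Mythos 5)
The paper itself gives no proof of this statement --- it is imported verbatim from Cirillo--Ticozzi \cite{cirillo2015decompositions} --- so there is nothing internal to compare against; I am judging your argument on its own. Your first half is correct and complete: the block-upper-triangular Kraus form forced by invariance of $\mathcal{H}_{S_1}$, the identity $P_{R_1}T^n(\rho)P_{R_1}=T_{R_1}^n(P_{R_1}\rho P_{R_1})$, the Cauchy--Schwarz control of the off-diagonal block by the diagonal ones, and the finite-dimensional equivalence $\tr(T_{R_1}^n(\cdot))\to 0 \Leftrightarrow \sigma(T_{R_1})<1$ are all sound, and $\sigma(T_{R_1})\le 1$ does follow from trace non-increase of the compressed map.

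The genuine gap is the one you flag yourself and then leave open: that $\mathcal{H}_{S_2}=\mathcal{H}_{S_1}\oplus\mathcal{H}_{T_1}$ is $T$-invariant. You cannot finish the second half without it, and "the peripheral part is generated by positive fixed points" is only half of the needed input --- even granting it, you still must pass from $T_{R_1}$-invariance inside $\mathcal{H}_{R_1}$ to invariance under the full channel $T$. The missing observation is trace preservation of $T$: if $\sigma\ge 0$ is supported in $\mathcal{H}_{R_1}$ and $T_{R_1}(\sigma)=\sigma$, then $\tr(P_{S_1}T(\sigma)P_{S_1})=\tr(T(\sigma))-\tr(\sigma)=0$, so by positivity $T(\sigma)=P_{R_1}T(\sigma)P_{R_1}=\sigma$; thus every positive fixed point of the compressed map is already a fixed point of $T$, its support is an enclosure, and $\mathcal{H}_{T_1}$ (the span of these supports) and hence $\mathcal{H}_{S_2}$ are enclosures. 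The same trick closes your claim that $\sigma(T_{R_2})<1$: a Perron eigenvector of $T_{R_2}$ at eigenvalue $1$ would be a positive fixed point of $T$ supported in $\mathcal{H}_{R_2}$, hence would lie in $\mathcal{D}_2$ and be supported in $\mathcal{H}_{T_1}$, a contradiction. Two smaller points: (i) $\mathcal{D}_2$ as defined in the paper is the generalized eigenspace of the single eigenvalue $\sigma(T_{R_1})=1$, not of all peripheral eigenvalues as you describe; the supports coincide because peripheral eigenvectors of a trace-non-increasing CP map are supported in the support of its maximal fixed point, but you should not conflate the two definitions. (ii) Since the paper's $T$ is already the Schr\"odinger-picture channel, invariance of $\mathcal{H}_{S_1}$ is invariance under $T$, not under $T^*$; this is only a notational slip and does not affect the block computation.
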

This Proposition can be commented with regards to $\mathcal{R}_+$ and $A (\mathcal{R}_+)$. Indeed, we have the following result:
\begin{proposition} \label{Prop_R_GAS}
    If $\mathcal{H}$ is finite-dimensional, then $\mathcal{R}_+$ is the smallest subspace that can be GAS. Any other GAS subspace must contain $\mathcal{R}_+$.
\end{proposition}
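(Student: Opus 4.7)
The plan is to establish two claims: that $\mathcal{R}_+$ is itself GAS in finite dimension (so the proposition is non-vacuous), and that every GAS subspace $\mathcal{H}_S$ must contain $\mathcal{R}_+$ (the minimality).

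The second claim is the short one and is essentially a rigidity argument. For any GAS subspace $\mathcal{H}_S$ with projector $P_S$ and any normal $\Phi^*$-invariant state $\rho$, the identity $\Phi_t^*(\rho) = \rho$ for all $t\ge 0$ reduces the GAS limit
\begin{equation*}
    \lim_{t\to\infty} \left\| \Phi_t^*(\rho) - P_S \Phi_t^*(\rho) P_S \right\| = 0
\end{equation*}
to $\| \rho - P_S \rho P_S \| = 0$, so $\supp(\rho) \subseteq \mathcal{H}_S$. Since $\mathcal{R}_+$ is by definition the supremum of the supports of normal invariant states, this immediately gives $\mathcal{R}_+ \subseteq \mathcal{H}_S$.

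For the first claim, the strategy is to invoke Theorem \ref{Thm_Exp_Conv}, whose hypotheses are $\dim \mathcal{R}_+^\perp < \infty$ (immediate from the finite dimension of $\mathcal{H}$) and $A(\mathcal{R}_+) = \mathds{1}$. For the latter, I would verify condition 4 of Theorem \ref{Thm_Ergodic}: in finite dimension, $\mathcal{F}(\Phi) = \ker \mathcal{L}$ and $\mathcal{F}(\Phi^*) = \ker \mathcal{L}^*$ have equal finite dimension and are paired nondegenerately by the trace via $\langle X, A \rangle = \tr(XA) = \tr(X \Phi_t(A)) = \tr(\Phi_t^*(X) A)$, so $\mathcal{F}(\Phi^*)$ automatically separates $\mathcal{F}(\Phi)$. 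By the equivalence (\ref{Eq_Equivalence_with_Abs_Operator}), $A(\mathcal{R}_+) = \mathds{1}$, and Theorem \ref{Thm_Exp_Conv} together with Proposition \ref{Prop_R_attractive} yields $(\mathds{1}-P_{\mathcal{R}_+})\Phi_t^*(\rho)(\mathds{1}-P_{\mathcal{R}_+}) \to 0$ in trace norm for every normal state $\rho$. A Cauchy--Schwarz bound on off-diagonal blocks of a positive operator, namely $\| PXQ \|^2 \le \|PXP\| \cdot \|QXQ\|$ for orthogonal projectors $P,Q$ and $X \ge 0$, then controls the cross terms $P_{\mathcal{R}_+} \Phi_t^*(\rho) (\mathds{1}-P_{\mathcal{R}_+})$, so that $\| \Phi_t^*(\rho) - P_{\mathcal{R}_+} \Phi_t^*(\rho) P_{\mathcal{R}_+} \| \to 0$, which is precisely the GAS property of $\mathcal{R}_+$.

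The main obstacle is the verification of $A(\mathcal{R}_+) = \mathds{1}$ in finite dimension; while not technically deep, it requires the duality argument between $\ker \mathcal{L}$ and $\ker \mathcal{L}^*$ and an appeal to Theorem \ref{Thm_Ergodic}. Once this is in place, the rest of the proof is essentially bookkeeping on the decomposition $\mathcal{H} = \mathcal{R}_+ \oplus \mathcal{R}_+^\perp$ and a direct application of results already established in Section \ref{Section_Ergodic_Thm}.
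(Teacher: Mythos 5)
Your proof is correct, and its skeleton matches the paper's (attractivity of $\mathcal{R}_+$ gives the GAS property; invariance of states gives minimality), but both halves are implemented differently. For minimality, the paper appeals to the decomposition of $\mathcal{R}_+$ into minimal enclosures (Theorem \ref{Thm_Decomp_Min_Encl}) and argues that no minimal enclosure can be GAS because each is the support of an invariant state; your rigidity argument --- $\Phi_t^*(\rho)=\rho$ for invariant $\rho$ forces $\rho = P_S\rho P_S$, hence $\supp(\rho)\subseteq\mathcal{H}_S$, and $\mathcal{R}_+\subseteq\mathcal{H}_S$ by taking the supremum over invariant states --- is more elementary, avoids Theorem \ref{Thm_Decomp_Min_Encl} entirely, and treats an arbitrary GAS subspace in one stroke rather than only subspaces of $\mathcal{R}_+$. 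For the GAS property of $\mathcal{R}_+$, the paper simply cites \cite{carbone2021absorption} for $A(\mathcal{R}_+)=\mathds{1}$ in finite dimension and then invokes Proposition \ref{Prop_R_attractive}; you instead derive $A(\mathcal{R}_+)=\mathds{1}$ from condition 4 of Theorem \ref{Thm_Ergodic} and make explicit the Cauchy--Schwarz control $\|PXQ\|^2\le\|PXP\|\,\|QXQ\|$ of the off-diagonal blocks, which the paper delegates to Lemma 2.1 of \cite{fagnola2004quantum}. The one point to tighten: your displayed chain $\tr(XA)=\tr(\Phi_t^*(X)A)$ only shows that the trace pairing between $\ker\mathcal{L}$ and $\ker\mathcal{L}^*$ is invariant, not that it is nondegenerate; nondegeneracy requires the zero eigenvalue of $\mathcal{L}$ to be semisimple, so that $\mathcal{B(H)}=\ker\mathcal{L}\oplus\mathrm{ran}\,\mathcal{L}$ and $A\in\ker\mathcal{L}$ with $\tr(XA)=0$ for all $X\in\ker\mathcal{L}^*$ lands in $\ker\mathcal{L}\cap\mathrm{ran}\,\mathcal{L}=\{0\}$; semisimplicity holds because each $\Phi_t$ is a unital contraction, so this is a standard but non-vacuous step. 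Also note that the exponential rate from Theorem \ref{Thm_Exp_Conv} is not needed for GAS; Proposition \ref{Prop_R_attractive} alone suffices.
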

\begin{proof}
    First, note that for finite-dimensional Hilbert spaces, $\mathcal{A}(\mathcal{R}_+) = \mathds{1}$ is always verified \cite{carbone2021absorption}.
    By Proposition \ref{Prop_R_attractive}, $\mathcal{R}_+$ is therefore GAS and any smaller GAS subspace must be contained in $\mathcal{R}_+$. By Theorem \ref{Thm_Decomp_Min_Encl}, $\mathcal{R}_+$ can be decomposed into an orthogonal sum of minimal enclosures. As every minimal enclosure is the support of an invariant state, none of them can be GAS, unless $\mathcal{R}_+$ is a minimal enclosure, as in Example \ref{Ex_Birth_Death_Decomp}.
\end{proof}

\vspace{2mm}

This way, we can see that $\mathcal{H}_{S_1}$ is GAS if and only if it contains $\mathcal{R}_+$. If it doesn't, then $\mathcal{H}_{T_1} = \mathcal{R}_+ \cap \mathcal{H}_{S_1}^\perp$, so that $\mathcal{H}_{S_2}$ can be the minimal GAS subspace containing $\mathcal{H}_{S_1}$. Note that Proposition \ref{Prop_R_GAS} can be directly generalised to infinite-dimensional separable Hilbert spaces:
\begin{theorem} \label{Thm_GAS}
    Suppose that $A (\mathcal{R}_+) = \mathds{1}$. Then $\mathcal{R}_+$ is the smallest GAS subspace and it is contained in any other GAS subspace.
\end{theorem}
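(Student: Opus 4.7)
The plan is to decompose the statement into two independent claims and attack each one separately. First I would show that $\mathcal{R}_+$ is itself GAS under the hypothesis $A(\mathcal{R}_+) = \mathds{1}$, and then I would show that every GAS subspace must contain $\mathcal{R}_+$. Together these give both the GAS property of $\mathcal{R}_+$ and its minimality, as well as the fact that it is contained in any other GAS subspace.

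For the first claim, I would leverage Proposition \ref{Prop_R_attractive}, which already gives that $\tr(Q \Phi_t^*(\rho) Q) \to 0$ for any normal state $\rho$, where $Q := \mathds{1} - P_{\mathcal{R}_+}$. To upgrade this to convergence of $\|\Phi_t^*(\rho) - P_{\mathcal{R}_+} \Phi_t^*(\rho) P_{\mathcal{R}_+}\|_1$, I would write
\begin{equation*}
    \Phi_t^*(\rho) - P_{\mathcal{R}_+} \Phi_t^*(\rho) P_{\mathcal{R}_+} = Q \Phi_t^*(\rho) + P_{\mathcal{R}_+} \Phi_t^*(\rho) Q
\end{equation*}
and bound each cross term via the trace-norm Cauchy--Schwarz inequality: for any bounded $X$ and positive trace-class $\sigma$, $\|X \sigma\|_1 \le \|X \sigma^{1/2}\|_2 \, \|\sigma^{1/2}\|_2 = \sqrt{\tr(X^* X \sigma)} \sqrt{\tr \sigma}$. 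Applying this with $X = Q$ and $\sigma = \Phi_t^*(\rho)$ (whose trace is $1$ since $\Phi$ is Markov) yields $\|Q \Phi_t^*(\rho)\|_1 \le \sqrt{\tr(Q \Phi_t^*(\rho) Q)} \to 0$, and similarly for the adjoint cross term.

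For the second claim, I would use the decomposition into minimal enclosures from Theorem \ref{Thm_Decomp_Min_Encl}. Let $\mathcal{H}_S$ be any GAS subspace. For any minimal enclosure $\mathcal{V} \subseteq \mathcal{R}_+$, Proposition 4.2 provides an extremal normal invariant state $\rho_\mathcal{V}$ whose support is exactly $\mathcal{V}$. Since $\Phi_t^*(\rho_\mathcal{V}) = \rho_\mathcal{V}$ for all $t$, the GAS assumption forces
\begin{equation*}
    \|\rho_\mathcal{V} - P_S \rho_\mathcal{V} P_S\| = \lim_{t\to\infty} \|\Phi_t^*(\rho_\mathcal{V}) - P_S \Phi_t^*(\rho_\mathcal{V}) P_S\| = 0,
\end{equation*}
so $\rho_\mathcal{V} = P_S \rho_\mathcal{V} P_S$, which gives $\mathcal{V} = \supp(\rho_\mathcal{V}) \subseteq \mathcal{H}_S$. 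Applying this to every $\mathcal{V}_\alpha$ and $\mathcal{V}_{\beta,\gamma}$ in the decomposition \eqref{Eq_Decomp_Hilbert_Space} and taking the (closed) sum yields $\mathcal{R}_+ \subseteq \mathcal{H}_S$. Combined with the first part, this gives minimality.

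The main subtlety I anticipate is the trace-norm Cauchy--Schwarz estimate in the first part: one must be careful that the bound is valid for trace-class operators in infinite dimension and uses boundedness of $\tr(\Phi_t^*(\rho))$ coming from the Markov property. Beyond that, the argument is essentially a packaging of Proposition \ref{Prop_R_attractive} together with the extremal-state characterisation of minimal enclosures, and no genuinely new structural input is needed.
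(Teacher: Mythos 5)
Your proposal is correct and follows essentially the same route as the paper, which proves the claim by combining Proposition \ref{Prop_R_attractive} (to get that $\mathcal{R}_+$ is GAS) with Theorem \ref{Thm_Decomp_Min_Encl} and the fact that minimal enclosures are supports of invariant states (to get containment in any other GAS subspace). You in fact supply two details the paper leaves implicit --- the Cauchy--Schwarz upgrade from $\tr\left( Q \Phi_t^*(\rho) Q \right) \to 0$ to trace-norm convergence required by the definition of GAS, and the direct argument that a GAS subspace must contain the support of every invariant state rather than merely the observation that no minimal enclosure can itself be GAS --- both of which are sound and strengthen the write-up.
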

\begin{proof}
    Just note that Proposition \ref{Prop_R_attractive} and Theorem \ref{Thm_Decomp_Min_Encl} still hold in this case.
\end{proof}

\vspace{2mm}

It is very difficult to find more precise statements about convergence towards invariant subspaces in all generality. Attraction domains of each minimal enclosure have to be taken into account, which can be done using absorption operators \cite{carbone2021absorption}. However, the transient subspace can distort the cards and we have to be careful about the fact that the decomposition in Theorem \ref{Thm_Decomp_Min_Encl} is not unique. Still, it can be done for specific examples, as in \cite{fagnola2005two} for the two-photon absorption and emission model. On the other hand, under additional assumptions, other results are available. For example, the following proposition holds:
\begin{proposition}
    Suppose that $\mathcal{T} = \mathcal{R}_0 = \{0\}$. Suppose also that there is only one decomposition of $\mathcal{R}_+$ in the form of Theorem \ref{Thm_Decomp_Min_Encl} and that each minimal enclosure is finite-dimensional. Then, for any normal state $\rho$ that is block-diagonal with respect to the decomposition,
    \begin{equation*}
        w^*-\lim_{t\to\infty} \Phi_t^* (\rho) = \sum_{\alpha \in A} \tr \left( P_\alpha \rho \right) \rho_\alpha
    \end{equation*}
    where $P_\alpha$ is the projection onto $\mathcal{V}_\alpha$ and $\rho_\alpha$ is the extremal invariant state associated to $\mathcal{V}_\alpha$.
\end{proposition}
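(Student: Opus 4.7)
The plan is to reduce the problem to the dynamics inside each finite-dimensional minimal enclosure, where irreducibility will force convergence to the unique invariant state $\rho_\alpha$, and then to combine the blocks via dominated convergence. The uniqueness of the decomposition in Theorem \ref{Thm_Decomp_Min_Encl} means that the family indexed by $B$ is empty, so together with $\mathcal{T}=\mathcal{R}_0=\{0\}$ we have $\mathcal{H}=\bigoplus_{\alpha\in A}\mathcal{V}_\alpha$, with $A$ at most countable by separability. The block-diagonality hypothesis gives $\rho=\sum_\alpha P_\alpha\rho P_\alpha$; since each $\mathcal{V}_\alpha$ is an enclosure, $\Phi_t^*$ preserves the support in $\mathcal{V}_\alpha$, and therefore
\begin{equation*}
\Phi_t^*(\rho)=\sum_{\alpha\in A}\Phi_t^*(P_\alpha\rho P_\alpha),
\end{equation*}
each summand being positive trace class of trace $p_\alpha:=\tr(P_\alpha\rho)$ and supported in $\mathcal{V}_\alpha$.

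For each $\alpha$ with $p_\alpha>0$, I set $\sigma_\alpha:=p_\alpha^{-1}P_\alpha\rho P_\alpha$, which is a state on $\mathcal{V}_\alpha$. The restriction of $\Phi$ to $\mathcal{B}(\mathcal{V}_\alpha)$ is a QMS on a finite-dimensional Hilbert space whose only invariant state is $\rho_\alpha$ and which is faithful there by minimality of $\mathcal{V}_\alpha$. Invoking the finite-dimensional irreducibility theory recalled in Section \ref{Section_Finite_Dim} (simplicity of $0$ in the spectrum of the restricted generator and strict negativity of the real parts of the other eigenvalues), the restricted semigroup converges exponentially in trace norm from any initial state, so $\Phi_t^*(\sigma_\alpha)\to\rho_\alpha$, whence $\Phi_t^*(P_\alpha\rho P_\alpha)\to p_\alpha\rho_\alpha$.

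To pass to the (possibly infinite) sum, I would test against an arbitrary $B\in\mathcal{B(H)}$. Writing
\begin{equation*}
\tr\!\left(B\,\Phi_t^*(\rho)\right)=\sum_{\alpha\in A}p_\alpha\,\tr\!\left(B\,\Phi_t^*(\sigma_\alpha)\right),
\end{equation*}
each summand is bounded in absolute value by $\|B\|\,p_\alpha$, and $\sum_\alpha p_\alpha=\tr(\rho)=1$ is summable. Dominated convergence (over the counting measure on $A$) combined with the pointwise-in-$\alpha$ convergence obtained in the previous step yields
\begin{equation*}
\tr\!\left(B\,\Phi_t^*(\rho)\right)\longrightarrow \sum_{\alpha\in A}p_\alpha\,\tr(B\rho_\alpha)=\tr\!\left(B\sum_{\alpha\in A}p_\alpha\rho_\alpha\right),
\end{equation*}
which is precisely the claimed weak$^*$ limit.

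The main obstacle is the primitivity ingredient in the second paragraph: without it one only gets Cesàro (mean-ergodic) convergence from Theorem \ref{Thm_Ergodic} inside each block, not the pointwise convergence needed here. This is exactly what Section \ref{Section_Finite_Dim} supplies, so the argument is essentially a clean combination of enclosure invariance, finite-dimensional primitivity, and an $\ell^1$-dominated convergence step. It is also worth noting where the hypotheses are used: the uniqueness of the decomposition excludes the $\mathcal{V}_{\beta,\gamma}$-blocks whose internal coherences could oscillate undampedly, and the block-diagonality of $\rho$ kills the off-diagonal terms $P_\alpha\rho P_{\alpha'}$ ($\alpha\neq\alpha'$) whose asymptotics need not be trivial in general.
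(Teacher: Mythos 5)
Your proof is correct and follows essentially the same route as the paper's: reduce to each finite-dimensional minimal enclosure, where the restricted QMS is irreducible with unique faithful invariant state $\rho_\alpha$, and apply the finite-dimensional convergence result of Theorem \ref{Thm_Irred_QMS}. You are in fact more complete than the paper, which states only the per-block limit $\Phi_t^*(P_\alpha\rho P_\alpha)\to\tr(P_\alpha\rho)\rho_\alpha$ and leaves implicit both the reduction $\mathcal{H}=\bigoplus_{\alpha\in A}\mathcal{V}_\alpha$ (via uniqueness of the decomposition forcing $B=\emptyset$) and the $\ell^1$-dominated-convergence step needed to sum over a possibly infinite index set $A$.
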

\begin{proof}
    By definition, the restriction of $\Phi$ to a minimal enclosure $\mathcal{V}_\alpha$ is an irreducible QMS in finite dimension. Therefore, by Theorem \ref{Thm_Irred_QMS} in the next section,
    \begin{equation*}
        \lim_{t\to\infty} \Phi_t^* \left( P_\alpha \rho P_\alpha \right) = \tr \left( P_\alpha \rho \right) \rho_\alpha
    \end{equation*}
\end{proof}

Indeed, for irreducible QMS in finite-dimensional Hilbert spaces, we can have complete knowledge of the convergence of the QMS as well as the convergence speed.


\section{Convergence speed for finite-dimensional Hilbert spaces} \label{Section_Finite_Dim}

The speed of convergence towards $\mathcal{R}_+$ or even to some normal invariant state is a very important question both in mathematics and for physical applications. A powerful tool to study it is the spectrum of the Lindbladian and, in particular, the question of the positivity of its spectral gap. This question can be very tricky and, in some cases, even undecidable \cite{cubitt2022undecidability}. However, there are models even in infinite-dimensional Hilbert spaces where it is possible to settle the question, see for example \cite{carbone2000exponential}. Restricting ourselves to finite-dimensional Hilbert spaces, we recall the strong result that for irreducible QMS, the spectral gap is strictly positive. We propose a detailed proof of this theorem, adapted from the one in \cite{wolf2012quantum}.

\begin{definition}
    A quantum channel is \textit{primitive} if it is irreducible and if 1 is its only eigenvalue of modulus one.
\end{definition}
\begin{theorem} [\cite{wolf2012quantum}, Proposition 7.5] \label{Thm_Irred_QMS}
    Let $\Phi^*$ be the adjoint semigroup of a QMS on a finite-dimensional Hilbert space. The followings are equivalent:
    \begin{enumerate}
        \item There is a $t_0 > 0$ such that $\Phi^*_{t_0}$ is irreducible;
        \item $\Phi_t^*$ is irreducible for all $t>0$;
        \item $\Phi_t^*$ is primitive for all $t>0$;
        \item There is a positive definite state $\rho_\infty$ such that for all state $\rho$, $\lim_{t\to\infty} \Phi_t^* (\rho) = \rho_\infty$;
        \item There is a positive definite state $\rho_\infty$ such that $\ker (\mathcal{L}^*) = \Span (\rho_\infty)$, $\mathcal{L}^*$ being the Lindbladian associated to $\Phi^*$.
    \end{enumerate}
\end{theorem}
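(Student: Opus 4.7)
The plan is to establish the cycle $(1)\Rightarrow(3)\Rightarrow(4)\Rightarrow(5)\Rightarrow(1)$, together with the trivial implications $(3)\Rightarrow(2)\Rightarrow(1)$. Three ingredients in finite dimension play the central role: (a) the spectral mapping theorem $\mathrm{spec}(\Phi_t^*)=\{e^{t\lambda}:\lambda\in\mathrm{spec}(\mathcal{L}^*)\}$, with matching Jordan structure; (b) the Perron--Frobenius theorem for irreducible quantum channels, which yields $1$ as a simple eigenvalue with positive-definite eigenvector and a cyclic peripheral spectrum of simple roots of unity; and (c) the Frigerio--Evans structural result, namely that for a Lindbladian whose kernel is one-dimensional and spanned by a faithful state, every non-zero eigenvalue of $\mathcal{L}^*$ has strictly negative real part.

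The anchor is $(1)\Rightarrow(3)$. Perron--Frobenius applied to the irreducible $\Phi_{t_0}^*$ provides a positive-definite eigenvector $\rho_\infty$ and the simplicity of the eigenvalue $1$. Since $\Phi_s^*(\rho_\infty)$ is another fixed state of $\Phi_{t_0}^*$, it must be proportional to $\rho_\infty$, with the constant pinned to $1$ by trace-preservation; hence $\rho_\infty\in\ker\mathcal{L}^*$, and combining with $\ker\mathcal{L}^*\subseteq\ker(\Phi_{t_0}^*-I)=\mathbb{C}\rho_\infty$ yields $\ker\mathcal{L}^*=\Span(\rho_\infty)$. Ingredient (c) then forces every non-zero eigenvalue of $\mathcal{L}^*$ into the strict left half-plane, and (a) translates this into: for every $t>0$, $\Phi_t^*$ has $1$ as its unique peripheral eigenvalue, simple, i.e.\ $\Phi_t^*$ is primitive.

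From there, $(3)\Rightarrow(4)$ follows from the Jordan decomposition of $e^{t\mathcal{L}^*}$: exponential decay of the non-kernel modes gives $\Phi_t^*(\rho)\to\rho_\infty$ for every state $\rho$. The step $(4)\Rightarrow(5)$ extends this convergence linearly to all of $\mathcal{B}(\mathcal{H})$, producing the rank-one limit $X\mapsto\tr(X)\rho_\infty$, whose image is necessarily $\ker\mathcal{L}^*$. For $(5)\Rightarrow(1)$, (c) ensures that $\mathcal{L}^*$ has no non-zero purely imaginary eigenvalue, so by (a) the state $\rho_\infty$ is the unique fixed state of every $\Phi_t^*$; if some $\Phi_t^*$ admitted a proper invariant subspace $V\subsetneq\mathcal{H}$, the existence of a fixed state for the restricted CPTP map $\Phi_t^*|_{\mathcal{B}(V)}$ would produce a second fixed state supported in $V$, contradicting either the uniqueness or the full support of $\rho_\infty$.

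The principal obstacle I foresee is the invocation of ingredient (c): it is a genuinely structural property of Lindbladians (a purely Hamiltonian generator has a kernel of dimension at least $\dim\mathcal{H}$, so condition (5) already implicitly requires some dissipation) and is what gives the equivalence its continuous-time flavour, absent in the discrete-time analogue. The remaining arguments reduce to a routine assembly of Perron--Frobenius, spectral mapping, and the finite-dimensional existence of fixed states for CPTP maps.
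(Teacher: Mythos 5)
Your architecture is sound, and the routine implications all check out: $(3)\Rightarrow(2)\Rightarrow(1)$, the Jordan-block argument for $(3)\Rightarrow(4)$, the identification of the image of the limit projection with $\ker\mathcal{L}^*$ for $(4)\Rightarrow(5)$, and the reduction of $(1)$ to the statement $\ker\mathcal{L}^*=\Span(\rho_\infty)$ via Perron--Frobenius and the commutation $\Phi^*_s\Phi^*_{t_0}=\Phi^*_{t_0}\Phi^*_s$. The problem is ingredient (c), which you correctly flag as the obstacle and then do not prove. The claim ``$\ker\mathcal{L}^*$ one-dimensional and spanned by a faithful state implies every non-zero eigenvalue of $\mathcal{L}^*$ has strictly negative real part'' is not an auxiliary fact: it \emph{is} the content of the theorem, namely the exclusion of a non-trivial cyclic peripheral spectrum, which is exactly what fails for discrete-time channels and is what makes the equivalence $(1)\Leftrightarrow(3)$ non-obvious. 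Invoking it as a black box (and the attribution is shaky --- the standard approach-to-equilibrium theorems of Spohn and Frigerio have hypotheses phrased in terms of the commutant of the Lindblad operators, not literally yours) leaves the mathematical heart of the proof unestablished; everything else in your write-up is bookkeeping around this one fact.

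The paper closes precisely this gap with a short, self-contained argument you could adopt. By the discrete-time Perron--Frobenius structure theorem for irreducible channels, the peripheral spectrum of $\Phi^*_{t_0}$ is the set of $m$-th roots of unity, all simple, so the peripheral eigenvalues of $t_0\mathcal{L}^*$ are of the form $2\imath\pi(k/m+n)$ with $0\le k<m$, $n\in\mathbb{Z}$, and all remaining eigenvalues have strictly negative real part. Now evaluate the semigroup at the incommensurate time $\pi t_0$: since $e^{2\imath\pi^2(k/m+n)}=1$ forces $\pi(k/m+n)\in\mathbb{Z}$, hence $k/m+n=0$ by irrationality of $\pi$, the fixed space of $\Phi^*_{\pi t_0}$ is again $\mathbb{C}\rho_\infty$; faithfulness of $\rho_\infty$ then makes $\Phi^*_{\pi t_0}$ irreducible, so its peripheral spectrum $\{e^{2\imath\pi^2 k/m}\}_{0\le k<m}$ must also be a group of roots of unity, which is impossible unless $m=1$. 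This delivers your ingredient (c) in exactly the case where you need it, using nothing beyond the discrete-time structure theorem you already invoke, and it also disposes of $(5)\Rightarrow(1)$ (for which, incidentally, (c) is overkill: it suffices to choose $t_0$ avoiding the countably many times at which some non-zero eigenvalue of $\mathcal{L}^*$ exponentiates to $1$).
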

\begin{proof}
    It is immediate to see that $3 \Rightarrow 2$ and $2 \Rightarrow 1$. Let's show that $1 \Rightarrow 3$. Since $\Phi_{t_0}^*$ is irreducible, by Theorem 6.6 in \cite{wolf2012quantum}, there exists an integer $m >0$ such that its matrix form in a certain basis is the following Jordan normal form:
    \begin{equation*}
        \widehat{\Phi}_{t_0}^* = \begin{pmatrix} D & 0 \\ 0 & T \end{pmatrix}
    \end{equation*}
    with $D := \diag\left(1, e^{ 2 \imath \pi / m} , \dots, e^{ 2 \imath (m-1) \pi / m } \right)$ and $T$ an upper triangular matrix with elements of modulus strictly less than one on its diagonal. In finite dimension, $\Phi^*$ is an exponential semigroup with generator $\mathcal{L}^*$, \textit{i.e}, $\Phi_{t_0}^* = e^{t_0 \mathcal{L}^*}$ \cite{engel2000one}, therefore we can reconstruct the matrix of $\mathcal{L}^*$. In the same basis,
    \begin{equation*}
        t_0 \widehat{\mathcal{L}}^* = \begin{pmatrix} D_{\mathcal{L}^*} & 0 \\ 0 & T_{\mathcal{L}^*} \end{pmatrix}
    \end{equation*}
    with $D_{\mathcal{L}^*} = \diag \left( 0, 2 \imath \pi/m, \dots, 2 \imath (m-1) \pi / m \right)$ defined modulus $2 \imath \pi$ and $T_{\mathcal{L}^*}$ an upper triangular matrix of which diagonal components have a negative real part. Although non unique, $\mathcal{L}^*$ has a one-dimensional kernel, and the associated eigenoperator $\rho_\infty$ is positive definite by assumption. Thus, the matrix form of, for example, $\Phi_{\pi t_0}^*$ in the same basis is:
    \begin{equation*}
        \widehat{\Phi}_{\pi t_0}^* = \begin{pmatrix} D' & 0 \\ 0 & T' \end{pmatrix}
    \end{equation*}
    with $D' = \diag \left( 1, e^{2 \imath \pi^2 / m}, \dots, e^{2 \imath (m-1) \pi^2 / m} \right)$ and $T'$ an upper triangular matrix with diagonal components with modulus strictly less than one. Therefore, $\Phi_{\pi t_0}^*$ is irreducible, since $\rho_\infty$ is its only invariant state. By Theorem 6.6 in \cite{wolf2012quantum}, this implies that the elements in $D'$ remain $m^\mathrm{th}$ roots of unity, which is possible if and only if $m=1$. Consequently, $\Phi_{t_0}^*$ is primitive, and same for all the $\Phi_t^*$, $t>0$.

    \vspace{1mm}

    Let's show that $3 \Leftrightarrow 4$. By construction of $T_{\mathcal{L}^*}$, $\Phi_t^* = e^{t \mathcal{L}^*}$ converges to the projection onto $\Span (\rho_\infty)$. So, for any state $\rho$, $\lim_{t\to\infty} \Phi_t^* (\rho) = \rho_\infty$. On the other hand, item $4$ implies that for any $t>0$,
    \begin{equation*}
        \lim_{k\to\infty} \Phi_{kt}^* (\rho) = \lim_{k\to\infty} \left( \Phi_{t}^* \right)^k (\rho) = \rho_\infty
    \end{equation*}
    which proves item $3$ by Theorem 6.7 in \cite{wolf2012quantum}.

    \vspace{1mm}

    To finish, let's show that $1 \Leftrightarrow 5$. Assuming item $5$, even if some eigenvalues of $\mathcal{L}^*$ are multiples of $2 \imath \pi$, there is a $t_0>0$ such that $\Phi_{t_0}^* = e^{t_0 \mathcal{L}^*}$ has only $\rho_\infty$ as invariant state, so item $1$ is satisfied. On the other hand, if item $5$ is not verified, since each eigenvector in $\ker \left( \mathcal{L}^* \right)$ is a fixed point of $\Phi_t^*$, then none of the $\Phi_t^*$, $t>0$, can be irreducible.
\end{proof}

\vspace{2mm}

This leads to the following Perron-Frobenius-like corollary:
\begin{corollary}
    Assume that $\Phi^*$ satisfies one of the hypotheses of Theorem \ref{Thm_Irred_QMS}. Let $\lambda_2$ be the first eigenvalue of $\mathcal{L}^*$ with negative real part. Then, for any constant $0 < c < - \Re(\lambda_2)$, there exists a constant $C>0$ such that for any state $\rho$:
    \begin{equation*}
        || \Phi_t^* (\rho) - \rho_\infty || \le C e^{- c t}
    \end{equation*}
\end{corollary}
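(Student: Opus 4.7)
The plan is to reduce the problem to bounding the semigroup on the trace-zero subspace and then to apply the Jordan decomposition of $\mathcal{L}^*$ restricted there. First, observe that $\rho_\infty$ is invariant under $\Phi^*$ and that $\Phi^*_t$ preserves the trace (since the pre-adjoint $\Phi_t$ is unital, $\tr(\Phi^*_t(X)) = \tr(X \Phi_t(\mathds{1})) = \tr(X)$). Hence for any state $\rho$, the operator $X := \rho - \rho_\infty$ satisfies $\tr(X) = 0$, and
\begin{equation*}
    \Phi_t^*(\rho) - \rho_\infty = \Phi_t^*(\rho - \rho_\infty) = \Phi_t^*(X),
\end{equation*}
with $X$ living in the trace-zero subspace $\mathcal{X} := \{Y \in \mathcal{B(H)} : \tr(Y) = 0\}$, which is invariant under $\Phi_t^*$ and hence under $\mathcal{L}^*$.

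Second, I would analyze the spectrum of the restriction $\mathcal{L}^*|_{\mathcal{X}}$. By item 5 of Theorem \ref{Thm_Irred_QMS}, $\ker(\mathcal{L}^*) = \Span(\rho_\infty)$, and since $\tr(\rho_\infty) = 1 \neq 0$, we have $\mathcal{B(H)} = \Span(\rho_\infty) \oplus \mathcal{X}$ as a direct sum of $\mathcal{L}^*$-invariant subspaces. Thus the eigenvalues of $\mathcal{L}^*|_{\mathcal{X}}$ are exactly the nonzero eigenvalues of $\mathcal{L}^*$, all of which have strictly negative real part by the proof of Theorem \ref{Thm_Irred_QMS}, and $\Re(\lambda_2)$ is the largest among them.

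Third, I would use the Jordan decomposition $\mathcal{L}^*|_{\mathcal{X}} = S + N$ where $S$ is diagonalizable, $N$ is nilpotent of index at most some $d$, and $[S,N]=0$, so that $e^{t \mathcal{L}^*|_{\mathcal{X}}} = e^{tS} e^{tN}$. The diagonalizable part satisfies $\|e^{tS}\| \le K_1 e^{\Re(\lambda_2) t}$ and the nilpotent part satisfies $\|e^{tN}\| \le K_2 (1 + t^{d-1})$. For any $c$ with $0 < c < -\Re(\lambda_2)$, the function $t \mapsto (1+t^{d-1}) e^{(\Re(\lambda_2)+c)t}$ is bounded on $[0,\infty)$ by some constant $K_3$, giving
\begin{equation*}
    \bigl\| e^{t\mathcal{L}^*}\big|_{\mathcal{X}} \bigr\| \le K_1 K_2 K_3 \, e^{-ct}.
\end{equation*}
Applying this to $X = \rho - \rho_\infty$ and noting that $\|\rho - \rho_\infty\|$ is uniformly bounded in $\rho$ (by $2$ in trace norm, say, which in finite dimension controls any other norm up to a constant) yields the desired bound with $C := 2 K_1 K_2 K_3$.

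The main subtlety, and the reason for the strict inequality $c < -\Re(\lambda_2)$, is the polynomial prefactor $(1+t^{d-1})$ coming from nontrivial Jordan blocks at eigenvalues on the line $\Re(z) = \Re(\lambda_2)$: one cannot in general take $c = -\Re(\lambda_2)$, but any strictly smaller rate absorbs the polynomial growth into the multiplicative constant. Beyond this, everything is a routine consequence of the finite-dimensional spectral theory already invoked in the proof of Theorem \ref{Thm_Irred_QMS}.
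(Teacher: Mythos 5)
Your argument is correct and follows essentially the same route as the paper: both rest on the Jordan decomposition of $\mathcal{L}^*$, the fact that all nonzero eigenvalues have strictly negative real part under the hypotheses of Theorem \ref{Thm_Irred_QMS}, and the absorption of the polynomial factor $e^{tN}$ into the margin $-\Re(\lambda_2)-c>0$. The only (cosmetic) difference is that you isolate the zero eigenvalue by restricting to the trace-zero subspace, whereas the paper subtracts the projection $\Phi_\infty^*$ onto $\Span(\rho_\infty)$ in the Jordan basis; your variant has the small bonus of making explicit why $0$ is a semisimple eigenvalue.
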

\begin{proof}
    Let $D+N$ be the Jordan normal form of $\mathcal{L}^*$, with $D=\diag \left( 0, \lambda_2, \dots, \lambda_d \right)$. Let $\Phi_\infty^*$ be the projection onto $\Span(\rho_\infty)$. Then, in a certain basis:
    \begin{equation*}
        e^{ct} \left( \widehat{\Phi}_t^* - \widehat{\Phi}_\infty^* \right) = \begin{pmatrix} 0 & & \\ & e^{t(\lambda_2 + c)} & \\ & \ddots & \\ & & e^{t(\lambda_d + c)} \end{pmatrix} e^{tN}
    \end{equation*}
    with 
    $e^{tN} = \sum_{k=1}^d t^k N^k / k!$. So, 
    \begin{equation*}
        \lim_{t\to\infty} e^{ct} \left( \widehat{\Phi}_t^* - \widehat{\Phi}_\infty^* \right) = 0
    \end{equation*}
    Choosing $C$ as the upper bound of the function $t \to e^{ct} ||\Phi_t^* - \Phi_\infty^*||$ finishes the proof.
\end{proof}

\vspace{2mm}

This result, that is very specific to time continuous QMS, can be very powerful since the restriction of the QMS to any minimal enclosure is irreducible by definition.


\section{Conclusion}

In this paper, we studied the qualitative behavior of QMS in infinite-dimensional separable Hilbert spaces, using the positive recurrent subspace $\mathcal{R}_+$. For this, the ergodic theorem of Frigerio and Verri can be a very useful tool.  We provided a sufficient condition that ensures an exponential convergence towards $\mathcal{R}_+$. As application of the ergodic theorem, we considered three examples and showed that they are ergodic QMS. In a second step, we studied more precisely what can happen inside $\mathcal{R}_+$, using the decomposition into minimal enclosures, and we emphasized the fact that this subspace is the most relevant one for studying GAS. Finally, we presented the well-known Perron-Frobenius-like theorem for the special case of irreducible QMS in finite-dimensional Hilbert spaces. 

Finding other criteria to verify the ergodic theorem, in particular using the structure of the fixed points of the QMS, is part of our ongoing research. It may also be possible to define relevant stopping times for some particular QMS, as in classical probability, to characterize attraction of $\mathcal{R}_+$. Another promising approach is to use the theory of parabolic partial differential equations to infer that $A(\mathcal{R}_+) = \mathds{1}$. Finally, generalising the approach used for generic QMS to other types of QMS that also stabilises some abelian subalgebras is very attractive and might be useful for analysing generalisations of the quantum harmonic oscillator.


\bibliographystyle{plain}  
\bibliography{refs}

\end{document}